\documentclass[lettersize,journal]{IEEEtran}

\usepackage[table,xcdraw]{xcolor}
\usepackage{changes}
\usepackage{amssymb}
\usepackage[T1]{fontenc}
\usepackage[utf8]{inputenc}
\usepackage{graphicx}
\usepackage{array}
\usepackage{listings}
\usepackage[most]{tcolorbox}
\usetikzlibrary{arrows.meta}
\usepackage{tikz}
\usetikzlibrary{positioning, arrows.meta, fit, backgrounds}
\usetikzlibrary{arrows.meta, positioning, shapes.multipart}
\usepackage[]{subcaption}
\usepackage{caption}

\usepackage{enumitem}

\usepackage{url}
\usepackage{titlesec}
\usepackage{pifont}

\definecolor{lightblue}{RGB}{173, 216, 230}
\definecolor{khaki}{RGB}{240, 230, 140}
\definecolor{lightgreen}{RGB}{144, 238, 144}

\usepackage{graphicx}
\usepackage{array}
\usepackage{blkarray}
\usepackage{multirow}
\usepackage[]{subcaption}
\usepackage{caption}

\usepackage{amsthm}
\newtheorem{lemma}{Lemma}
\newtheorem{theorem}{Theorem}
\newtheorem{problem}{Problem}

\usepackage{algorithm}
\usepackage{algpseudocode}

\newtheorem{definition}{Definition}

\usepackage{booktabs,tabularx}
\usepackage{makecell}
\usepackage{array}

\title{\LARGE \bf
	Constraint-Driven Synthesis of Hyper Petri Nets
}

\author{Maksym Figat$^{a}$,
	Alessandro Pinto$^{b}$
	\thanks{$^{a}$Maksym Figat (corresponding author) is with the Warsaw University of Technology (WUT), Warsaw, Poland. \texttt{maksym.figat@pw.edu.pl}. This work was initiated while he was a Visiting Postdoctoral Researcher at the NASA Jet Propulsion Laboratory (JPL), California Institute of Technology, Pasadena, CA, USA.
	}
	\thanks{$^{b}$The author is with NASA Jet Propulsion Laboratory (JPL), California Institute of Technology, Pasadena, CA, USA.}
	\thanks{This work has been submitted to the IEEE for possible publication. Copyright may be transferred without notice, after which this version may no longer be accessible.}
	}

\begin{document}
	\maketitle
	\thispagestyle{empty}
	\pagestyle{empty}
	
	\begin{abstract}
		This paper addresses the modeling and synthesis of constrained robotic system behaviors using Petri nets (PNs). It investigates how to construct models in which all observable system states satisfy given logical constraints while remaining consistent with executable transition semantics. To answer this, we introduce the Hyper Petri Net (HyPN) approach, which synthesizes Petri nets from Boolean specifications while explicitly distinguishing between observable markings and underlying Petri net execution. The proposed method introduces an explicit execution semantics over observable states, induced by admissible (atomic) firing sequences, ensuring by construction that all observable markings satisfy the constraints and revealing a fundamental mismatch between logical feasibility and executable behavior. This is demonstrated in two scenarios inspired by a lunar rover system. These results are particularly relevant for the design of robotic and autonomous systems, as they provide a structured way to ensure correct system configurations while explicitly accounting for execution constraints. The proposed framework further suggests new research directions in execution abstraction, admissible transition systems, and policy selection for navigating between constraint-satisfying states.
	\end{abstract}
	
	\def\abstractname{Note to Practitioners}
	\begin{abstract}
		Robotic and automation systems are commonly developed by translating operational and safety requirements into supervisory logic and then verifying that the resulting controller satisfies the intended constraints. As the number of interacting requirements increases, this process becomes difficult because engineers must manually encode mutual exclusions, resource constraints, and other operational rules, while exhaustive verification is often impractical due to state-space explosion. This paper presents a method for automatically synthesizing Hyper Petri Nets directly from Boolean requirements. The synthesized models guarantee by construction that all observable system states satisfy the specified constraints, while explicitly showing which transitions between these states are executable and which are not. This distinction is important in safety-critical applications, where a logically correct configuration may still be unreachable during execution. The approach was demonstrated using two supervisory-control scenarios inspired by a lunar rover mission. Because the resulting models are standard Petri nets, they can be further analyzed using established techniques such as reachability analysis, deadlock detection, and structural verification. The synthesis procedure is modular, and the size of the generated model grows linearly with the number of variables and constraints, allowing compact models to represent large sets of admissible configurations. The same representation can also be used during operation to monitor which requirements are currently satisfied and which are violated. The method reduces manual modeling effort and increases confidence in system correctness, making it relevant to robotics, industrial automation, and other safety-critical systems, including applications requiring transparent supervision of AI-based decision-making. The current framework focuses on Boolean invariant constraints. Future work will extend the approach to temporal and quantitative requirements, policy selection among alternative execution paths, and integration with higher-level decision-making methods.
	\end{abstract}
	
\begin{IEEEkeywords}
	Petri nets, Supervisory control, Formal synthesis, Constraint satisfaction, Formal methods
\end{IEEEkeywords}


\section{Introduction}
\label{sec:introduction}

\IEEEPARstart{I}{n} many robotic applications, and especially in safety-critical domains such as space exploration, safety is often more important than performance. \emph{Invariant properties} represent an important and broad class of safety requirements. Informally, invariants are conditions expressed over the state space that must hold at all times during system execution. Examples include maintaining resource levels above critical thresholds, or ensuring that mutually incompatible actions are never executed simultaneously. Traditionally, ensuring that a robotic system satisfies such invariants places a significant burden on engineers. In practice, they must tackle two difficult tasks: (1) they must design a control policy that satisfies the invariants over the set of all reachable states, and (2) they must verify that the resulting closed-loop system indeed respects those invariants. Both steps are challenging and error-prone. To reason about such systems, it is necessary to distinguish between logically admissible states and how they are connected through execution. This is captured by an explicit execution semantics defining transitions between states.

Designing a policy under many interacting safety constraints is difficult because the constraints may be numerous and coupled. In effect,  engineers are often forced to manually encode a complex decision procedure that resembles a specialized constraint solver. As the number of constraints grows, this process becomes increasingly ad hoc, brittle, and difficult to scale. Small changes in the specification such as the addition of an invariant may require substantial redesign of the policy logic. Verification is also problematic. In realistic robotic systems, the set of reachable states can be extremely large. As a result, exhaustive exploration is impractical. Instead, practitioners rely heavily on simulation and testing which is incomplete: it only explores a small subset of the possible executions, and the quality of the evidence depends strongly on the scenarios selected by subject-matter experts. 

The gap between stringent safety requirements and limited assurance guarantees at the implementation level motivates a synthesis approach. A synthesis tool starts from the set of invariant properties and synthesizes an executable behavioral model whose admissible runs satisfy the given invariants by construction. In this way, safety is not treated as a property to be tested after the fact, but as a structural property of the synthesized behavior. Many executable models have been used in robotics such as state machines, behavior trees \cite{ghzouli2023behavior,iovino2022survey}, and Petri Nets \cite{Figat:2022:RAS}. We are interested in Petri Nets (PNs) for several reasons including their ability to naturally capture resource limitations, and the support for analysis of several properties including deadlocks, invariants, and safety. In this paper, \emph{we propose a scalable synthesis process for generating a special class of Petri Nets from a set of invariant constraints}. 

\subsection{Related work}
\label{subsec:related-work}
Model-driven engineering frameworks for robotics focus on integrating design-time abstractions with system implementation, improving development efficiency but often relying on predefined architectural patterns~\cite{Brugali:2020}. Other works emphasize formal analysis and verification of robotic systems using Petri nets (PNs) and related formalisms, highlighting challenges related to state-space explosion and model complexity~\cite{Silva:2021,DalZilio:2023}. 

Verification suffers from scalability issues when applied to complex systems. To address scalability, hierarchical PN models have been proposed, introducing structured representations across multiple abstraction levels~\cite{Huber91}. Extensions of these models consider compositional and behavioral semantics~\cite{Vogler:1992,Zuberek:1996}, as well as applications to complex and distributed systems~\cite{Luo:2015}. However, unrestricted composition mechanisms often complicate analysis and property preservation.

The Robotic System Hierarchical Petri Net (RSHPN) meta-model~\cite{Figat:2022:RAS} introduces a structured hierarchical representation of robotic systems across agent, subsystem, and behaviour levels, with parameters defined using the Robotic System Specification Language (RSSL)~\cite{Figat:2022:RAL}. This approach enables modular modeling and supports automated controller generation. Subsequent work~\cite{Figat:2026:analysis} demonstrates how structural decomposition and property inheritance can improve scalability of formal analysis. The $\rm RS(TM)^2$ methodology~\cite{Figat:2026:RSTM2} further extends this line of research by synthesizing system architectures from high-level ontological descriptions and enabling simulation-based evaluation across multiple abstraction levels.

The problem of synthesizing system behaviors from high-level specifications has been studied in formal methods and control theory. In particular, the synthesis of controllers from temporal logic specifications has been extensively investigated, with applications in robotics and autonomous systems~\cite{wongpiromsarn2012receding, shamgah2018path, maoz2020just,tadewos2022specification}. However, these approaches typically focus on the expressiveness of the specification language and often face scalability challenges in complex systems. Temporal logic specifications can capture both safety and liveness properties, but the resulting synthesis problem can be computationally intractable for large state spaces or complex specifications.

In the context of PNs, \cite{lacerda2019petri} proposes a method for synthesizing PNs from temporal logic specifications. Starting from a PN model of a system to be controlled, and a set of linear temporal logic constraints, their method synthesizes a supervisor enforcing the constraints. The approach has doubly exponential complexity in the size of the constraint formula, due to its translation into a deterministic finite automaton. A related line of work focuses on synthesizing PN supervisors from Generalized Mutual Exclusion Constraints (GMECs)~\cite{iordacheSupervisionBasedPlace2006,Badouel2015, li2010synthesis,iordache2005survey}. In these approaches, the problem is defined by a PN and linear inequalities over markings. The synthesis procedure adds places and transitions to enforce constraints.

The work presented in this paper also focuses on invariant properties. However, we don't assume a PN as a starting point for the definition of invariants. The problem we solve takes as input a set of propositional formulas, and directly generates a PN model whose observable reachable markings satisfy the given formulas. To achieve this goal in a scalable way, we introduce a new class of PNs called Hyper Petri Nets (HyPNs) that allow for the development of a compositional synthesis process. HyPNs are related to the work by Mukund on transition system models of concurrency \cite{mukund1992transition}, and the work of Fujita on hyperautomata \cite{Fujita:2025:hyperautomata}, but differ in their focus on the distinction between observable states and execution semantics.

The HyPN synthesis framework proposed in this paper represents a new design point compared to the related work. First, it accepts a logical specification as input that focuses on safety properties captured by invariants. This is not as expressive as temporal logic, but it is sufficient and expressive enough for many safety-critical applications. Second, it is compositional and scalable, meaning that adding a new invariant reduces to generating a new HyPN for that invariant and composing it with the existing model. Furthermore, the size of the HyPN does not grow exponentially. 

\subsection{Contribution}
\label{subsec:contribution}
This paper makes the following contributions:
\begin{itemize}
	\item Introduction of the HyPN model, which separates observable system states from underlying PN execution and defines an explicit execution semantics over them.
	\item Development of a constraint-driven synthesis procedure that constructs HyPN models from Boolean specifications in conjunctive normal form, ensuring that all observable markings satisfy the given constraints by construction.
	\item Formalization of execution via admissible (atomic) firing sequences, revealing a gap between constraint satisfaction and realizable system behavior, reflected in restricted transitions with respect to reachability.
	\item Validation of the proposed approach on a lunar rover-inspired case study, illustrating constraint-driven synthesis and execution semantics.
\end{itemize}

\subsection{Paper outline}
\label{subsec:paper-outline}
The remainder of the paper is organized as follows. Section~\ref{sec:hypn} introduces the Hyper Petri Net (HyPN) model and distinguishes between observable states and PN execution. Section~\ref{sec:model-synthesis} presents the synthesis procedure, and Section~\ref{sec:formal-properties} analyzes its properties and scalability. Section~\ref{sec:endurance-case-study} describes the lunar rover case study, while Section~\ref{sec:scenarios} presents representative scenarios. Section~\ref{sec:discussion} discusses implications for system design, and Section~\ref{sec:conclusions} concludes the paper.

\section{Hyper Petri net (HyPN)}
\label{sec:hypn}

We consider a classical weighted Petri net (PN)~\cite{Murata:1989:Petri:Nets} defined as $PN = (P, T, F, W, M_0)$, where $P$ and $T$ are finite sets of places and transitions, $F \subseteq (P \times T) \cup (T \times P)$ is the set of arcs, and $W: F \to \mathbb{N}^{+}$ assigns weights, and $M_0$ the initial marking. A \emph{marking} is a function $M: P \to \mathbb{N}$. For a transition $t \in T$, ${}^{\bullet}t$ and $t^{\bullet}$ denote its sets of input and output places, respectively. A transition $t \in T$ is enabled at marking $M$ if $\forall p \in {}^{\bullet}t:\; M(p) \ge W(p,t)$, and its firing produces a new marking. The set of reachable markings from $M_0$ is $\mathcal{R}(M_0)$.

In standard PNs, each reachable marking is a valid system state. This interpretation leads to two problems. First, modelers and analysts are forced to write properties considering all system states and the possible transitions among them, even if they are not interested in certain implementation details (such as the sequence of intermediate states reached during initialization). Second, when all states are considered as valid, enforcing constraints typically requires additional structural mechanisms, such as guard places or mutual-exclusion constraints. As system complexity grows, these mechanisms may significantly increase model size or overly restrict system behavior, potentially leading to deadlocks.

The primary concern in analysis should be to ensure that constraints hold at selected, semantically meaningful states. This perspective is related to the concept of Hyper Automata~\cite{Fujita:2025:hyperautomata}, where transitions are defined over higher-order structures rather than individual steps. Similarly, we argue that certain sequences of elementary actions should be treated as atomic, without exposing intermediate configurations.

To address this limitation, we introduce the Hyper Petri Net (HyPN) model. This motivates a formal distinction between \textbf{internal markings}, produced during execution, and \textbf{observable markings}, which represent stable and semantically meaningful system states. Instead of interpreting the firing of each transition as a state change, the system evolves between observable markings through finite sequences of transitions treated as atomic execution steps, and referred to as \emph{hyper-transitions}, which induce state changes only at their endpoints.

\subsection{Definition and Semantics}

\begin{definition}[Hyper Petri Net]
A \emph{Hyper Petri Net} (HyPN) is a tuple:
\begin{equation}
	\label{eq:hypn}
	HyPN = (P, T, F, W, M_0, \mathcal{M}^{obs}),
\end{equation}
where $(P, T, F, W, M_0)$ is a standard PN and $\mathcal{M}^{obs} \subseteq \mathcal{R}(M_0)$ denotes the set of \emph{observable markings}.
\end{definition}

We define the execution semantics of a HyPN in terms of its induced abstract transition system, which captures the possible behaviors as sequences of observable markings according to the semantics of a non-deterministic automaton.

In the following, the terms \emph{atomic observable transition} and \emph{macro-step} are used interchangeably to denote a single abstract transition between two observable markings.

As in standard PNs, a transition $t \in T$ is enabled at marking $M$, denoted $enabled(t,M)$, if $\forall p \in {}^{\bullet}t:\; M(p) \ge W(p,t)$.
The firing of $t$ at $M$ produces a successor marking $M' = fire(t,M)$. This is also represented as $M \xrightarrow{t} M'$. We refer to transitions that are enabled according to this standard definition as \textbf{\emph{locally enabled}}. Also, let $M_i$ be a marking and $\sigma = \langle t^{(1)}, \dots, t^{(k)} \rangle \in T^*$ a sequence of transitions. Then, $\sigma$ is \textbf{\emph{realizable}} from $M_i$, denoted $realizable(M_i,\sigma)$, if there exists a marking sequence:
\[
M_i = M_{i,0}
\xrightarrow{t^{(1)}} M_{i,1}
\xrightarrow{t^{(2)}} \dots
\xrightarrow{t^{(k)}} M_{i,k}
\]
such that for all $n \in \{1,\dots,k\}$, $enabled\big(t^{(n)}, M_{i,n-1}\big)$. We say that the firing of a realizable sequence $\sigma$ from $M_i$ produces the sequence of markings $(M_{i,0}, \dots, M_{i,k}) = fire(\sigma, M_i)$. We use the shorthand notation $last(\sigma, M_i) = fire(\sigma, M_i)(k)$ to denote the last marking resulting from the execution of the entire sequence.

\begin{definition}[Sequences of transitions]
	Let $M_i$ be a marking and $\sigma = \langle t^{(1)}, \dots, t^{(k)} \rangle \in T^*$ a realization sequence. Then:
	\begin{itemize}
		\item $\sigma$ is \textbf{\emph{admissible}} from $M_i$, denoted $admissible(M_i,\sigma)$, if and only if $realizable(M_i,\sigma)$ holds, and $last(\sigma, M_i) \in \mathcal{M}^{obs}$. 
		\item $\sigma$ is \textbf{\emph{atomic}} from $M_i$, denoted $atomic(M_i,\sigma)$,  if and only if $M_i \in \mathcal{M}^{obs}$, $last(\sigma,M_i) \in \mathcal{M}^{obs}$, and $\forall \ell \in \{1,\dots,k-1\} : fire(\sigma,M_i)(\ell) \notin \mathcal{M}^{obs}$, i.e., only the endpoints are observable.
		\item A transition $t \in T$ is hyper-enabled at $M_i \in \mathcal{M}^{obs}$, denoted $enabled_H(t,M_i)$ if and only if $\exists \sigma \in T^* : atomic(M_i,\sigma) \land \sigma(1)=t$.
	\end{itemize}
\end{definition}

Given two observable markings $M_i, M_j \in \mathcal{M}^{obs}$, we define the \textbf{\emph{atomic sequence set}} $\Sigma_H(M_i,M_j) = \{\sigma \in T^* \mid atomic(M_i,\sigma) \land last(\sigma,M_i) = M_j\}$ as the set of atomic sequences leading from $M_i$ to $M_j$. The \textbf{\emph{atomic sequence sets}} can be computed using the algorithm in Sec.~\ref{subsec:algorithm_sigma_h}. 

\begin{definition}[Abstract state transition system]
Let $HyPN = (P,T,F,W,M_0,\mathcal M^{obs})$ be a Hyper Petri Net, and $\Sigma_H$ its atomic sequence sets. Then, the abstract state transition system of $HyPN$ is $\mathcal{S}_{HyPN} = (\mathcal{M}^{obs}, \Rightarrow, M_0)$ where $M_0 \in \mathcal{M}^{obs}$ is the initial state and $\Rightarrow \subseteq \mathcal{M}^{obs} \times \mathcal{M}^{obs}$ is defined as follows:
\[
M_i \Rightarrow M_j \iff \Sigma_H(M_i,M_j) \neq \emptyset.
\]
\end{definition}
Note that $\Rightarrow$ captures only direct transitions and is not transitive in general, since atomic sequences cannot pass through intermediate observable markings.

\subsection{Computation of $\Sigma_H$}
\label{subsec:algorithm_sigma_h}
Given a HyPN structure, initial marking $M_0$, and observable markings $\mathcal{M}^{obs}$, $\Sigma_H$ is computed according to Algorithm~\ref{alg:sigma_h}. The exploration uses a stack $\mathcal{Q}$ of tuples $(M,\sigma,Visited)$, where $M$ is the current marking, $\sigma$ the firing sequence to $M$, and $Visited$ is the set of visited markings along the current branch. Here, $\epsilon$ denotes the empty firing sequence, and $\sigma \circ t$ appends transition $t$ to $\sigma$.

\begin{algorithm}
	\caption{Computation of $\Sigma_H$}
	\label{alg:sigma_h}
	\begin{algorithmic}[1]
		\For{each $M_i \in \mathcal{M}^{obs}$}
		\State Initialize $\Sigma_H(M_i,M_j) \gets \emptyset$ for all $M_j \in \mathcal{M}^{obs}$
		\State $\mathcal{Q} \gets \{(M_i,\epsilon,\{M_i\})\}$
		
		\While{$\mathcal{Q} \neq \emptyset$}
		\State Pop $(M,\sigma,Visited)$ from $\mathcal{Q}$
		
		\For{each $t \in T$ with $enabled(t,M)$}
		\State $M' \gets fire(M,t)$
		\State $\sigma' \gets \sigma \circ t$
		
		\If{$M' \in \mathcal{M}^{obs}$ and $M' \neq M_i$}
		\State $\Sigma_H(M_i,M') \gets \Sigma_H(M_i,M') \cup \{\sigma'\}$
		\ElsIf{$M' \notin Visited$}
		\State Push $(M',\sigma',Visited \cup \{M'\})$ onto $\mathcal{Q}$
		\EndIf
		\EndFor
		\EndWhile
		\EndFor
	\end{algorithmic}
\end{algorithm}

The search enumerates only simple firing paths between observable markings, i.e., markings are not revisited within the same branch, preventing cyclic sequences from being included in $\Sigma_H$. The exploration is performed independently for each $M_i \in \mathcal{M}^{obs}$, since $\Sigma_H$ is defined as a relation parameterized by observable markings. Consequently, the computation can be parallelized across different initial observable markings $M_i$.

\subsection{Example}
To illustrate the execution semantics of HyPN, consider the example depicted in Fig.~\ref{fig:hypn_example}. The net consists of two independent components, which allows the enabled transitions to be executed in arbitrary order. Let $\mathcal{M}^{obs} = \{ M_0, M_1 \}$ denote the set of observable markings. The marking $M_0 = (1,0,1,0,1,1)$ represents the initial observable configuration, whereas $M_1 = (0,1,0,1,1,1)$ represents the target observable configuration. All remaining reachable markings are treated as internal, i.e., they are excluded from $\mathcal{M}^{obs}$. At $M_0$, two transitions $t_1$ and $t_2$ are locally enabled. Firing either transition produces an intermediate marking that remains reachable in the underlying PN but is not included in $\mathcal{M}^{obs}$. The target observable marking $M_1$ is obtained only after both subsystem updates are completed. The observable transition $M_0 \Rightarrow M_1$ admits multiple atomic fire sequences, i.e., $\Sigma_H(M_0,M_1) = \{ \langle t_1,t_2\rangle, \langle t_2,t_1\rangle \}$. Both sequences are realizable and atomic, and therefore induce the same observable transition $M_0 \Rightarrow M_1$. They are explicitly given by: $\sigma^{(1)} = \langle t_1,t_2\rangle : \quad M_0 \xrightarrow{t_1} M_{0,1} \xrightarrow{t_2} M_1$, and $\sigma^{(2)} = \langle t_2,t_1\rangle : \quad M_0 \xrightarrow{t_2} M'_{0,1} \xrightarrow{t_1} M_1.$ 

The intermediate markings $M_{0,1}$ and $M'_{0,1}$ are reachable in the PN but excluded from $\mathcal{M}^{obs}$ by design. Hence, they are internal and do not appear in the transition system $\mathcal{S}_{HyPN}$, with $\mathcal{M}^{obs} = \{M_0, M_1\}$ and initial state $M_0$. By symmetry, the same reasoning applies in the reverse direction. From $M_1$, the locally enabled transitions $\{t_3,t_4\}$ generate internal markings leading to $M_0 \in \mathcal{M}^{obs}$. Thus, $\Sigma_H(M_1,M_0) = \{\langle t_3,t_4\rangle, \langle t_4,t_3\rangle\}$, implying $M_0 \Rightarrow M_1$ and $M_1 \Rightarrow M_0$.

This example shows that HyPN preserves the full behavior of the underlying PN while enforcing atomicity at the level of observable markings. Multiple admissible internal firing sequences may correspond to a single observable transition.

\begin{figure}
	\centering
	\includegraphics[width=0.85\linewidth]{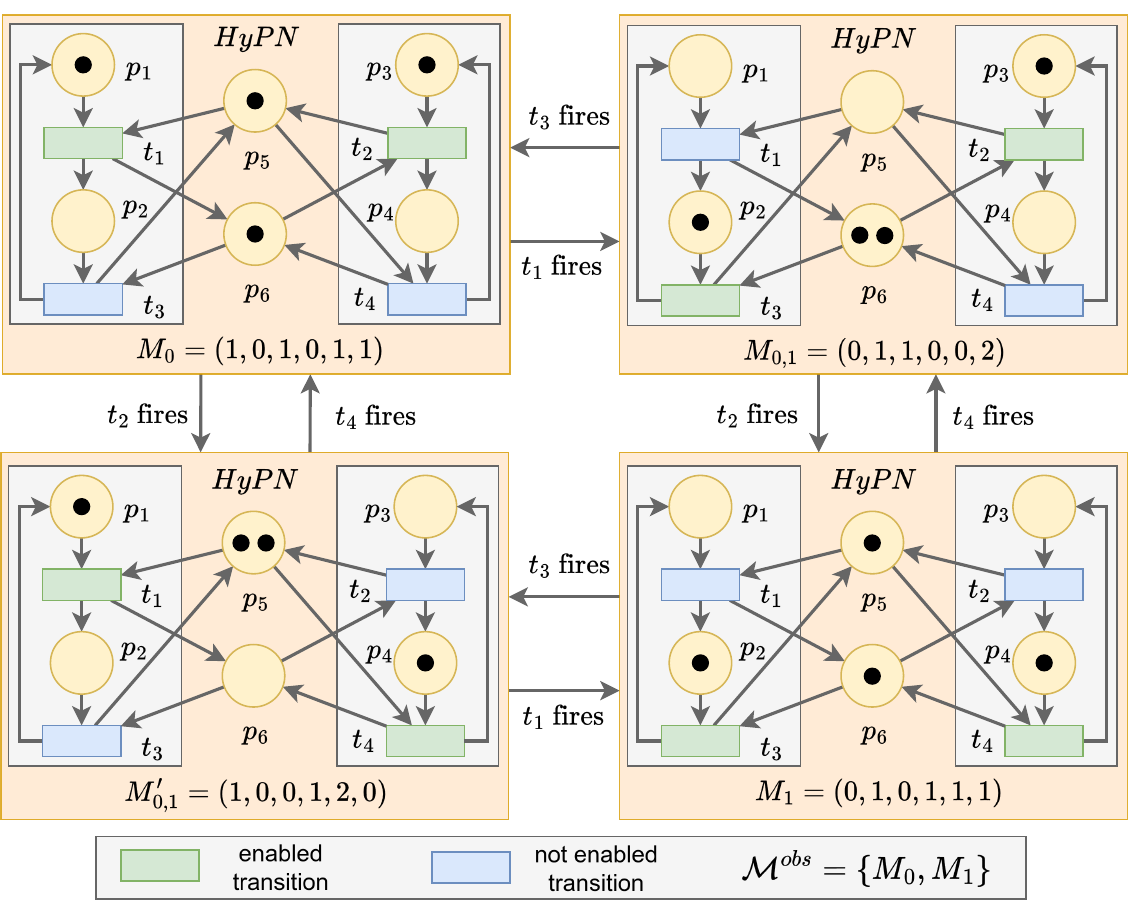}
	\caption{Example HyPN illustrating atomic macro-steps between observable markings. $M_0 \Rightarrow M_1$ admits sequences $\langle t_1,t_2\rangle$ and $\langle t_2,t_1\rangle$, and $M_1 \Rightarrow M_0$ admits $\langle t_3,t_4\rangle$ and $\langle t_4,t_3\rangle$.}
	\label{fig:hypn_example}
\end{figure}

\subsection{Structural implications and design rationale}
This subsection explains the main structural consequences of the execution-based definition of $\Sigma_H$ and clarifies the role of HyPN as an extension of classical PNs. HyPN extends classical PNs without changing their semantics. If every reachable marking is observable, i.e., $\mathcal{M}^{obs} = \mathcal{R}(M_0)$, and atomic sequences reduce to single transitions of the form $\langle t \rangle$, then the induced relation $\Rightarrow$ coincides with the standard PN firing relation, and the induced transition system corresponds to the reachability graph. Thus, HyPN generalizes classical PNs rather than replacing them.

Given the underlying PN $(P,T,F,W,M_0)$ together with the chosen set of observable markings $\mathcal{M}^{obs}$, the abstract transition system $\mathcal{S}_{HyPN} = (\mathcal{M}^{obs}, \Rightarrow)$ is derived from admissible firing sequences and depends only on the operational behavior of the net. The induced transition relation $\Rightarrow$ is determined by the execution semantics of the PN, and can be automatically computed via Algorithm~\ref{alg:sigma_h}. However, the choice of observable markings is a modeling decision. Observable and internal markings are not determined automatically by reachability but are defined by the designer. Atomicity is defined relative to the selected set $\mathcal{M}^{obs}$ and ensures that macro-steps represent meaningful changes at the observable level. HyPN is more than a simple grouping of firing sequences; it introduces a clear separation between observable and internal markings, a notion of hyper-enabled transitions at the observable level, and an induced abstract transition system $\mathcal{S}_{HyPN} = (\mathcal{M}^{obs}, \Rightarrow)$. Together, these elements create a two-layer semantic model that preserves the detailed behavior of the PN while providing a structured abstraction at the system level.

The relational structure induced by HyPN does not reduce the dynamics to a single transition between observable markings. For a given pair $(M_i, M_j) \in \mathcal{M}^{obs}$, there may exist multiple atomic transition sequences that realize the same observable state change. These alternative execution paths are explicitly represented in $\Sigma_H$, thereby exposing the internal execution structure. Consequently, HyPN is not just a transition-graph abstraction but a relational structure that captures execution-level alternatives, redundancy, multiple realization paths, and potential system-level robustness. The existence of multiple admissible sequences between observable markings reflects alternative ways of realizing the same observable behavior within the same PN dynamics. HyPN makes these alternatives explicit by representing them as elements of a relational structure between observable markings, rather than merely as distinct paths in a reachability graph. In the next section, we provide a constructive procedure for deriving $\mathcal{M}^{obs}$ from a Boolean specification, transforming HyPN into a constraint-driven modeling framework in which logical requirements are enforced at the observable level while preserving the intrinsic dynamics of the PN.

\subsection{Basic properties of HyPN}
\label{subsec:basic-properties-of-hypn}
This subsection presents key properties of HyPN, including its relation to classical PNs, structural isomorphism, and behavioral equivalence. These properties provide a basis for reasoning about HyPN models and enable comparison between different HyPN instances and their observable behaviors.

\begin{theorem}[Conservative extension]
\label{theorem:conservative-extension}
Let $H = (P,T,F,W,M_0,\mathcal{M}^{obs})$ be a HyPN. If $\mathcal{M}^{obs} = \mathcal{R}(M_0)$ then, the induced observable transition relation $\Rightarrow$ coincides with the standard PN firing relation.
\end{theorem}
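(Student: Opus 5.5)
The plan is to prove the two inclusions between $\Rightarrow$ and the one-step firing relation $\{(M,M') \mid \exists t\in T: enabled(t,M) \wedge M'=fire(t,M)\}$, both restricted to $\mathcal{R}(M_0)$. The central observation is that when $\mathcal{M}^{obs}=\mathcal{R}(M_0)$, \emph{every} marking produced during execution from an observable marking is itself observable. Indeed, if $M_i\in\mathcal{R}(M_0)$ and $\sigma$ is realizable from $M_i$, then each $fire(\sigma,M_i)(\ell)$ is reachable from $M_0$ by concatenating firing sequences.

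\textbf{Step 1 (atomic sequences have length one).} Let $\sigma=\langle t^{(1)},\dots,t^{(k)}\rangle$ with $atomic(M_i,\sigma)$. Atomicity demands $fire(\sigma,M_i)(\ell)\notin\mathcal{M}^{obs}$ for all $1\le \ell\le k-1$. By the observation above, every such intermediate marking lies in $\mathcal{R}(M_0)=\mathcal{M}^{obs}$. Hence the index range $\{1,\dots,k-1\}$ must be empty, i.e.\ $k\le 1$. Consequently $\Sigma_H(M_i,M_j)\subseteq\{\langle t\rangle \mid t\in T\}$, which is exactly the regime described informally in the structural-implications subsection.

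\textbf{Step 2 (the two inclusions).}
\begin{itemize}
\item ($\Rightarrow\ \subseteq$ firing). If $M_i\Rightarrow M_j$, pick $\langle t\rangle\in\Sigma_H(M_i,M_j)$. Realizability gives $enabled(t,M_i)$, and $last(\langle t\rangle,M_i)=M_j$ gives $M_i\xrightarrow{t}M_j$.
\item (firing $\subseteq\ \Rightarrow$). Let $M_i\in\mathcal{R}(M_0)$ with $M_i\xrightarrow{t}M_j$. Then $M_j\in\mathcal{R}(M_0)=\mathcal{M}^{obs}$. The sequence $\langle t\rangle$ is realizable, its endpoints are observable, and the intermediate condition is vacuous. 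So $\langle t\rangle\in\Sigma_H(M_i,M_j)$, and therefore $M_i\Rightarrow M_j$.
\end{itemize}
As a corollary, $\mathcal{S}_{HyPN}$ is the reachability graph with transition labels forgotten.

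\textbf{Main obstacle: the degenerate cases $k=0$ and self-loops.} This is not a hard computation but a matter of pinning down conventions. Read literally, the empty sequence $\epsilon$ satisfies $atomic(M_i,\epsilon)$ with $last(\epsilon,M_i)=M_i$, which would put $M_i\Rightarrow M_i$ for every $M_i$, even without a self-loop transition. I would resolve this by adopting the convention implicit in Algorithm~\ref{alg:sigma_h}, which only ever inserts sequences $\sigma\circ t$ of length at least one: $\Sigma_H$ ranges over nonempty sequences. With this convention, Step 1 gives exactly $k=1$.

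A related point is the algorithm's test $M'\neq M_i$, which excludes self-loops $M\xrightarrow{t}M$. Under the definitional reading of $\Sigma_H$ used above, $\langle t\rangle$ is still atomic in that case, so the equivalence holds including self-loops. I would state the theorem with respect to the definition of $\Sigma_H$, not its algorithmic computation, and remark that the algorithm agrees with the definition off the diagonal.
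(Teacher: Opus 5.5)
Your proof is correct and follows the paper's argument: when $\mathcal{M}^{obs}=\mathcal{R}(M_0)$, every intermediate marking is observable, so atomic sequences reduce to single transitions and $\Rightarrow$ coincides with one-step firing. You go beyond the paper's one-line proof in two ways. First, you prove the reverse inclusion explicitly. Second, you notice that the empty sequence $\epsilon$ literally satisfies $atomic(M_i,\epsilon)$ and would add spurious reflexive pairs; your nonempty-sequence convention, consistent with Algorithm~\ref{alg:sigma_h} and with $enabled_H$, is exactly what makes the statement true as intended.
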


\begin{proof}
If $\mathcal{M}^{obs} = \mathcal{R}(M_0)$, then every reachable marking is observable and no marking is treated as internal. Hence, any atomic sequence must consist of a single transition, since any longer sequence would pass through intermediate observable markings and violate atomicity. Therefore, the induced relation $\Rightarrow$ coincides with the standard PN firing relation.
\end{proof}

\begin{theorem}[Observable soundness]
\label{theorem:observable-soundness}
Let $M_i, M_j \in \mathcal{M}^{obs}$. If $M_i \Rightarrow M_j$, then $M_j \in \mathcal{R}(M_i)$ in the underlying PN.
\end{theorem}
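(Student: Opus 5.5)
The plan is to unfold the definition of $\Rightarrow$ down to a concrete firing sequence and then read reachability off that sequence. By the definition of the abstract state transition system, $M_i \Rightarrow M_j$ means $\Sigma_H(M_i,M_j) \neq \emptyset$. So I will first pick a witness $\sigma = \langle t^{(1)},\dots,t^{(k)}\rangle \in \Sigma_H(M_i,M_j)$. By definition of the atomic sequence set, $atomic(M_i,\sigma)$ holds and $last(\sigma,M_i) = M_j$.

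The next step is to extract realizability from atomicity. The definition of $atomic$ speaks of $fire(\sigma,M_i)$ and $last(\sigma,M_i)$, and both are only defined for realizable sequences. I will therefore read $atomic(M_i,\sigma)$ as including $realizable(M_i,\sigma)$, which is also how Algorithm~\ref{alg:sigma_h} builds $\Sigma_H$: it appends only locally enabled transitions. I expect this to be the only step needing care, since the definition of atomicity does not state realizability explicitly. If a referee objects, I will point out that $last(\sigma,M_i)$ is undefined otherwise. It is also true that an atomic sequence is admissible, because its last marking is observable.

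Realizability then gives a marking sequence
\[
M_i = M_{i,0} \xrightarrow{t^{(1)}} M_{i,1} \xrightarrow{t^{(2)}} \dots \xrightarrow{t^{(k)}} M_{i,k} = M_j ,
\]
where each $t^{(n)}$ is enabled at $M_{i,n-1}$ in the standard PN sense. An induction on $n$ along this sequence shows $M_{i,n} \in \mathcal{R}(M_i)$ for every $n$. The base case is $M_{i,0} = M_i \in \mathcal{R}(M_i)$. The inductive step holds because a marking obtained by firing an enabled transition from a reachable marking is itself reachable. Taking $n = k$ yields $M_j \in \mathcal{R}(M_i)$.

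Finally, I will remark that since $M_i \in \mathcal{M}^{obs} \subseteq \mathcal{R}(M_0)$, transitivity of reachability also gives $M_j \in \mathcal{R}(M_0)$. This confirms that $\Rightarrow$ never leaves the reachability set of the underlying net.
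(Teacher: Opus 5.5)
Your proof is correct and follows the same route as the paper's: you unfold $M_i \Rightarrow M_j$ to a witness $\sigma \in \Sigma_H(M_i,M_j)$, use that an atomic sequence is realizable, and read $M_j \in \mathcal{R}(M_i)$ off the resulting firing sequence. You also note that realizability is only implicit in the definition of $atomic$, through $fire$ and $last$; the paper passes over this point with its one-line ``by definition of atomicity, $\sigma$ is admissible and therefore realizable.''
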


\begin{proof}
If $M_i \Rightarrow M_j$, then by definition of $\Rightarrow$ there exists a sequence $\sigma \in T^*$ such that $atomic(M_i,\sigma)$ and execution of $\sigma$ from $M_i$ yields $M_j$. By definition of atomicity, $\sigma$ is admissible and therefore realizable. Hence, there exists a marking sequence $M_i = M_{i,0} \xrightarrow{t^{(1)}} \cdots \xrightarrow{t^{(k)}} M_{i,k} = M_j$, where each transition is locally enabled. Therefore, $M_j \in \mathcal{R}(M_i)$.
\end{proof}

\begin{definition}[Structural isomorphism]
\label{definition:structural-isomorphism}
Let $HyPN_i = (P_i,T_i,F_i,W_i,M_{0,i},\mathcal{M}^{obs}_i)$ and 
$HyPN_j = (P_j,T_j,F_j,W_j,M_{0,j},\mathcal{M}^{obs}_j)$ be two HyPNs. They are structurally isomorphic, denoted $HyPN_i \cong HyPN_j$, if there exist bijections $\phi_P : P_i \to P_j$ and $\phi_T : T_i \to T_j$ such that $(x,y) \in F_i \Longleftrightarrow (\phi(x),\phi(y)) \in F_j$ for all $(x,y) \in F_i$, where:
\[
\phi(z) =
\begin{cases}
	\phi_P(z), & z \in P_i,\\
	\phi_T(z), & z \in T_i,
\end{cases}
\]
and $W_i(x,y) = W_j(\phi(x),\phi(y))$ for all $(x,y) \in F_i$. Thus, structural isomorphism preserves the place-transition structure and arc weights up to renaming of elements.
\end{definition}

\begin{definition}[Behavioral equivalence]
\label{definition:behavioral-equivalence}
The behavior of a HyPN is given by the observable transition system $S_{HyPN} = (M^{obs}, \Rightarrow, M_0)$.
Let $H_i = (P_i, T_i, F_i, W_i, M_{0,i}, \mathcal{M}^{obs}_i)$ and $H_j = (P_j, T_j, F_j, W_j, M_{0,j}, \mathcal{M}^{obs}_j)$ be two HyPNs. The nets $H_i$ and $H_j$ are said to be behaviorally equivalent, denoted by $H_i \approx H_j$, if the following conditions hold:
\begin{enumerate}
	\item \textbf{Structural isomorphism:} The underlying PN structures are structurally isomorphic, i.e., there exist bijections between places and transitions that preserve arcs and arc weights.
	\item \textbf{Reachability of projected initial markings:} Let $M_{0,i}^{j}$ denote the projection of the initial marking $M_{0,i}$ of $H_i$ onto $H_j$ under the structural isomorphism, and let $M_{0,j}^{i}$ denote the projection of $M_{0,j}$ onto $H_i$.
	Then $M_{0,j} \in R(M_{0,i}^{j})$ and $M_{0,i} \in R(M_{0,j}^{i})$, where $R(\cdot)$ denotes the reachability set of the underlying PN.
	\item \textbf{Equality of observable behaviors:} The observable transition systems $\mathcal{S}_{HyPN,i}$ and $\mathcal{S}_{HyPN,j}$ coincide under the structural isomorphism.
\end{enumerate}
Consequently, the two nets induce identical observable marking sets and equivalent observable transition relations.
\end{definition}			

\section{Model synthesis}
\label{sec:model-synthesis}

This section presents a constraint-driven synthesis of a HyPN from a propositional specification $\varphi$ of invariant properties. Constraints are encoded into the net such that all observable markings satisfy $\varphi$, while intermediate markings follow execution semantics. The construction is clause-wise and compositional, with $H$ denoting the final net and $H_i$ clause-specific subnets. The overall synthesis concept is illustrated in Fig.~\ref{fig:model_synthesis_motivation}.

\begin{figure}
	\centering
	\includegraphics[width=0.9\linewidth]{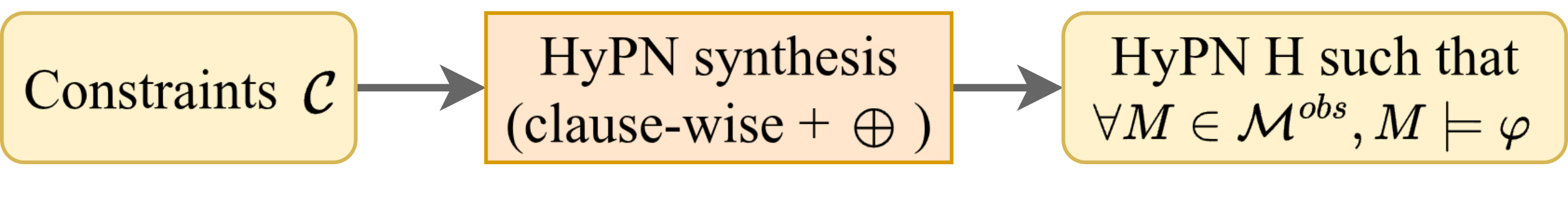}
	\caption{HyPN synthesis from constraints $\mathcal{C}$ yielding $H \models \varphi$.}
	\label{fig:model_synthesis_motivation}
\end{figure}

\subsection{Synthesis algorithm}
\label{subsec:synthesis-algorithm}

Let $V$ denote the set of propositional variables. Each variable $x \in V$ induces two complementary literals: a positive literal $x$ and a negative literal $\bar{x}$. We denote the sets of positive and negative literals by $L := \{x \mid x \in V\}$ and $\bar{L} := \{\bar{x} \mid x \in V\}$, respectively, and define the set of all literals as $\mathcal{L} = L \cup \bar{L}$. Let $\mathcal I_V = [V \rightarrow \mathbb B]$ be the set of all possible interpretations, meaning assignments of each variable to a truth value. For a propositional formula $\varphi(V)$ over $V$ and an interpretation $\nu \in \mathcal I_V$, we write $\nu \models \varphi(V)$ if the formula holds when each variable is replaced by its assigned value according to $\nu$. 

\begin{problem}[Synthesis problem]
Given a propositional formula $\varphi$ over the set of propositional variables $V$, construct a HyPN $H$ and a mapping $m : \mathcal{M}^{\mathrm{obs}} \rightarrow \mathcal I_V$ such that $\forall M \in \mathcal{M}^{\mathrm{obs}}$, $m(M) \models \varphi$, also written as $H \models \varphi$.   
\end{problem}

We start from a set of propositional constraints $\mathcal{C}$ over variables $V$, and we define $\varphi = \bigwedge_{C_i \in \mathcal{C}} C_i$. The formula $\varphi$ is then transformed into conjunctive normal form $\varphi^{\mathrm{CNF}} = \bigwedge_{i=1}^{m} \varphi_i$, where each clause has the form $\varphi_i = \bigvee_{l \in S_i} l$, and $S_i \subseteq \mathcal{L}$ denotes the set of literals in $\varphi_i$ (e.g., $\varphi_i = a \lor b \lor \bar{c}$ gives $S_i = \{a, b, \bar{c}\}$). The synthesis constructs a clause-specific HyPN for each clause $\varphi_i$ (Sec.~\ref{subsec:clause-subnet-synthesis}) and composes them into a single network using the composition operator $\oplus$ (Sec.~\ref{subsec:pn-composition}), as summarized in Algorithm~\ref{alg:hypn_synthesis}.

\begin{algorithm}
	\caption{HyPN synthesis from $\varphi^{\mathrm{CNF}}$}
	\label{alg:hypn_synthesis}
	\textbf{Input:} $\varphi^{\mathrm{CNF}} = \bigwedge_{i=1}^{m} \varphi_i$ \\
	\textbf{Output:} HyPN $H$
	\begin{algorithmic}[1]
		
		\For{each clause $\varphi_i$}
		\State Construct $H_i$ such that $H_i \models \varphi_i$ \Comment{Sec.~\ref{subsec:clause-subnet-synthesis}}
		\EndFor
		
		\State $H \gets \bigoplus_{i=1}^{m} H_i$ \Comment{Sec.~\ref{subsec:pn-composition}}
		
		\State \Return $H$
	\end{algorithmic}
\end{algorithm}
Algorithm~\ref{alg:hypn_synthesis} constructs clause-specific HyPNs $H_i$ for each $\varphi_i$ and composes them into $H$ using $\oplus$, yielding $H \models \varphi$.

\subsection{Clause-level HyPN synthesis}
\label{subsec:clause-subnet-synthesis}
This subsection defines a clause-specific HyPN $H_i$ for a clause $\varphi_i=\bigvee_{l_j\in S_i} l_j$, where $S_i \subseteq \mathcal{L}$ is a set of literals in $\varphi_i$, and $V_i := \{ x \mid x \text{ or } \bar{x} \in S_i \}$ the corresponding variables. The subnet is defined as $H_i=(P_i,T_i,F_i,W_i,M_{0,i},\mathcal{M}_i^{\mathrm{obs}})$ with:
\[
P_i = \{p_x, p_{\bar{x}} \mid x \in V_i\} \cup \{p_i^{\mathrm{sat}}\},
\quad
T_i = \{t_{x,\bar{x}}, t_{\bar{x},x} \mid x \in V_i\}.
\]
Here, $\{p_x, p_{\bar{x}} \mid x \in V_i\}$ defines the set of literal places $P_i^{\mathrm{lit}}$ and $p_i^{\mathrm{sat}}$ is a guard place for $\varphi_i$.
The arc set $F_i$ is defined as:
\[
F_i = \bigcup_{x \in V_i} \{(p_x, t_{x,\bar{x}}), (t_{x,\bar{x}}, p_{\bar{x}}),
(p_{\bar{x}}, t_{\bar{x},x}), (t_{\bar{x},x}, p_x)\}
\]
\[
\cup \bigcup_{l_j \in S_i} \{(p_i^{\mathrm{sat}}, t_{l_j,\bar{l}_j}), (t_{\bar{l}_j,l_j}, p_i^{\mathrm{sat}})\}.
\]
All arcs have unit weight. The first term connects, for each variable $x \in V_i$, complementary literal places via switching transitions, enforcing the invariant $M(p_x) + M(p_{\bar{x}}) = 1$ for all $x \in V_i$ and all reachable markings. The second term connects the guard place $p_i^{\mathrm{sat}}$ with transitions associated with literals in $S_i$, enforcing the clause-level invariant $\sum_{l_j \in S_i} M(p_{\bar{l}_j}) + M(p_i^{\mathrm{sat}}) = |S_i|$.
A marking satisfies $\varphi_i$ iff $M(p_i^{\mathrm{sat}}) > 0$.
The initial marking $M_{0,i}$ is defined such that for each literal $l_j \in S_i$, $M_{0,i}(p_{l_j}) = 1$ and $M_{0,i}(p_{\bar{l}_j}) = 0$. Additionally, $M_{0,i}(p_i^{\mathrm{sat}}) = |S_i|$. For example, for $\varphi_i = l_A \lor \bar{l}_B$, the initial marking satisfies $M_{0,i}(p_{l_A})=1$ and $M_{0,i}(p_{\bar{l}_B})=1$, while all complementary places are unmarked, as in Fig.~\ref{fig:pn_patterns_based_on_place_invariants_negative}(b).

The set of observable markings is defined as:
\[
\mathcal{M}_i^{\mathrm{obs}} = \{ M \in \mathcal{R}(M_{0,i}) \mid M(p_i^{\mathrm{sat}}) > 0 \}.
\]
For a marking $M \in \mathcal{M}_i^{\mathrm{obs}}$ and a variable $x \in V_i$, $m(M)(x)$ is true if there is a token in $p_x$, and false otherwise. By construction, every observable marking satisfies $\varphi_i$, i.e., $\forall M \in \mathcal{M}_i^{\mathrm{obs}},\; M \models \varphi_i$, and thus $H_i \models \varphi_i$. Thus, $H_i$ encodes the satisfaction condition of $\varphi_i$ and serves as a building block for the final HyPN $H$. Examples of clause-specific subnets are shown in Fig.~\ref{fig:pn_patterns_based_on_place_invariants_negative}: (a) a clause composed of positive literals (e.g., $l_A \lor l_B$), and (b) a clause containing both positive and negative literals (e.g., $l_A \lor \bar{l}_B$).

\begin{figure}[t]
	\centering
	\includegraphics[width=0.85\linewidth]{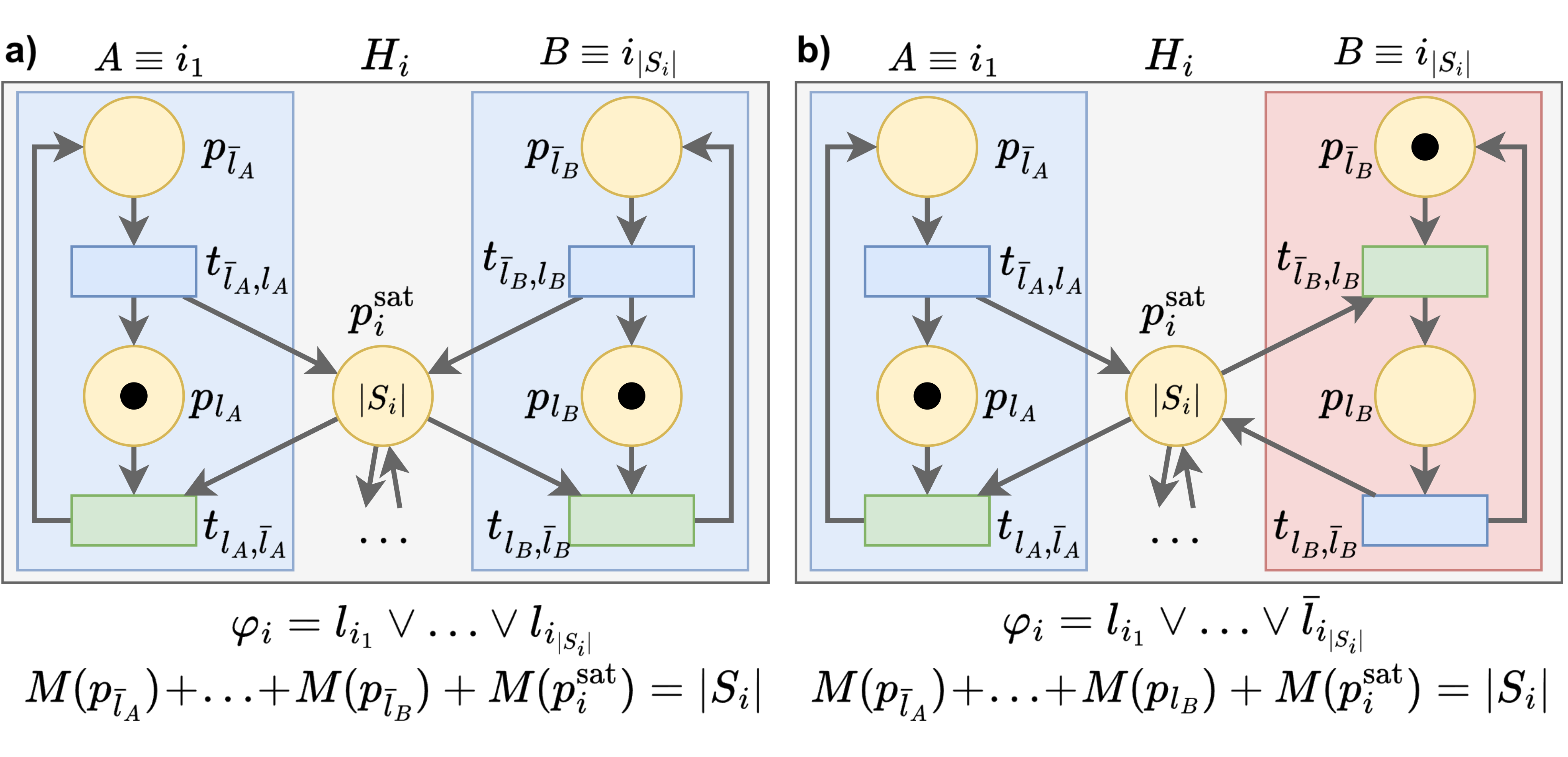}
	\caption{Examples of clause-specific HyPN subnets: (a) clause composed of positive literals; (b) clause containing both positive and negative literals. Each subnet enforces clause satisfaction at observable markings.}	
	\label{fig:pn_patterns_based_on_place_invariants_negative}
\end{figure}

\subsection{HyPN composition}
\label{subsec:pn-composition}
Given subnets $\{H_i\}$ encoding $\varphi_i$, the goal is to construct a HyPN $H$ with observable markings satisfying $\varphi = \bigwedge_i \varphi_i$. The composition operator $\oplus$ combines $H_i$ and $H_j$, preserving $(\varphi_i \land \varphi_j)$, with $H_{\{i,j\}} = H_i \oplus H_j$. The final HyPN is $H = \bigoplus_i H_i$ (see Fig.~\ref{fig:general_idea_of_pn_composition}). The operator $\oplus$ satisfies closure, commutativity, and associativity (see Sec.~\ref{sec:formal-properties}), ensuring well-definedness.

\begin{figure}
	\centering
	\begin{subfigure}{0.49\linewidth}
		\includegraphics[width=1.0\linewidth]{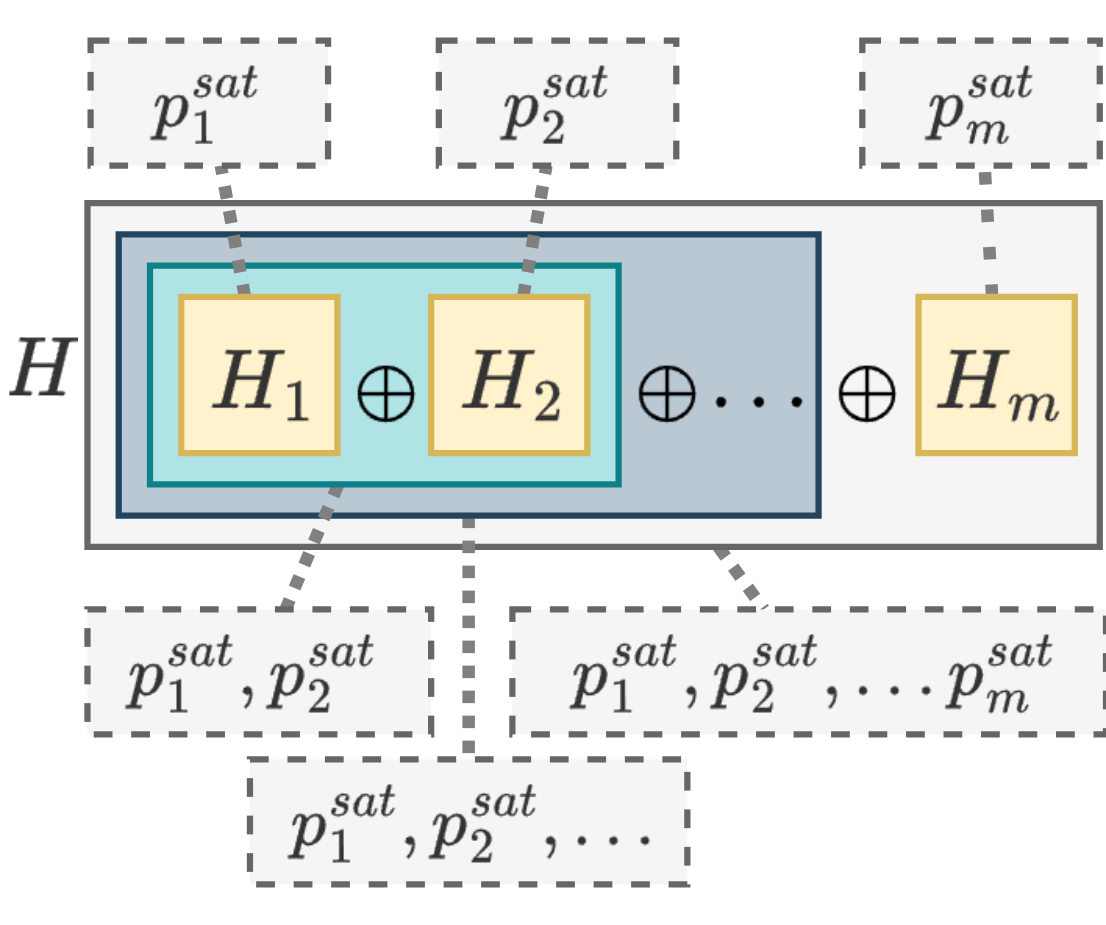}
		\caption{}
		\label{fig:general_idea_of_pn_composition}
	\end{subfigure}
	\begin{subfigure}{0.49\linewidth}
		\includegraphics[width=1.0\linewidth]{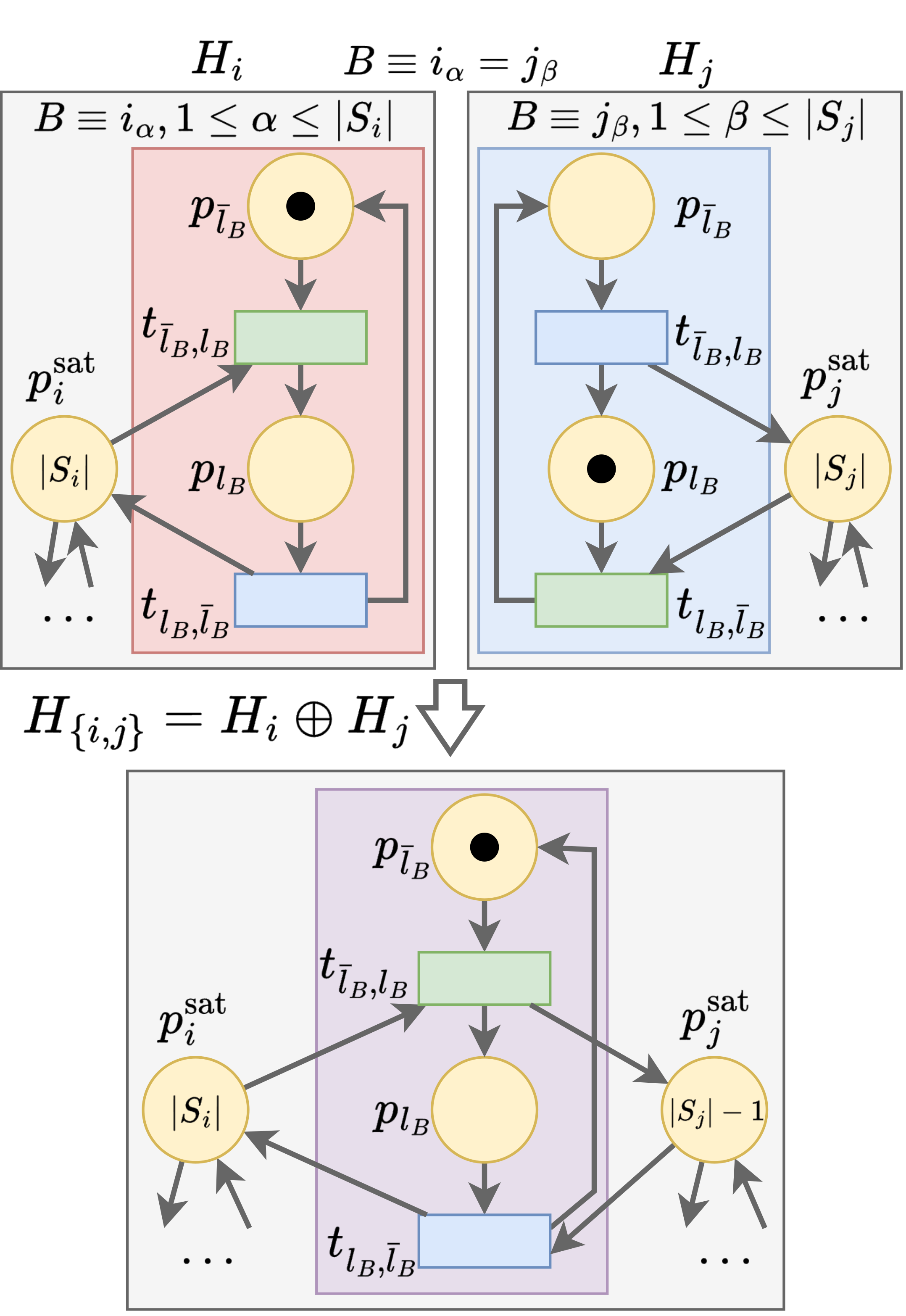}
		\caption{}
		\label{fig:pn_patterns_combine_two_pn_mixed}
	\end{subfigure}
	\caption{(a) Clause-wise HyPN composition with $\oplus$; (b) binary composition $H_i \oplus H_j$ with conflicting literals, $M_i(p_{\bar{l}_B}) = 1 \land M_j(p_{l_B}) = 1$.}
	\label{fig:general_idea_of_pn_composition_and_pn_patterns_combine_two_pn_mixed}
\end{figure}

\textit{Binary composition of HyPNs.} 
Let $H_i=(P_i,T_i,F_i,W_i,M_{0,i},\mathcal{M}^{\mathrm{obs}}_i)$ and $H_j=(P_j,T_j,F_j,W_j,M_{0,j},\mathcal{M}^{\mathrm{obs}}_j)$ be two HyPNs. Their composition is $H_{\{i,j\}} = H_i \oplus H_j$, where:
\begin{equation*}
	\begin{aligned}
		H_{\{i,j\}} =
		(
		& P_i \cup P_j,\;
		T_i \cup T_j,\;
		F_i \cup F_j,\\
		& W_i \cup W_j,\;
		M_{0,i} \oplus M_{0,j},\;
		\mathcal{M}^{\mathrm{obs}}_{\{i,j\}})
	\end{aligned}
\end{equation*}
and the initial marking is $M_{0,\{i,j\}} := M_{0,i} \oplus M_{0,j}$, where $\oplus$ aligns tokens while preserving literal consistency. For a composed HyPN $H_I$, let $P_I = P^{lit}_I \cup P^{sat}_I$, where $P^{sat}_I = \{p^{sat}_i \mid i \in I\}$ and $P^{lit}_I \cap P^{sat}_I = \varnothing$. For clause-level nets, $P^{sat}_i \cap P^{sat}_j = \varnothing$ for $i \neq j$. Hence, overlaps between $P_i$ and $P_j$ may occur only in literal places. The construction of $M_{0,\{i,j\}}$ reduces to aligning initial markings on shared literal places, with modification of guard-place markings. This alignment is defined by the following cases.

\noindent
\textit{Case 1 (No shared literal places)} If $P^{lit}_i \cap P^{lit}_j = \emptyset$, the composed marking is given by direct union:
\[
(M_{0,i} \oplus M_{0,j})(p)
=
\begin{cases}
	M_{0,i}(p), & p \in P_i, \\
	M_{0,j}(p), & p \in P_j.
\end{cases}
\]

\noindent
\textit{Case 2 (Shared literals without conflict).} 
If $P^{lit}_i \cap P^{lit}_j \neq \emptyset$ and $\forall p \in P^{lit}_i \cap P^{lit}_j:\; M_{0,i}(p) = M_{0,j}(p)$, the composed marking is as in Case~1.

\noindent
\textit{Case 3 (Literal conflict and swap).} 
If there exists $p \in P^{lit}_i \cap P^{lit}_j$ such that $M_{0,i}(p) \neq M_{0,j}(p)$, define the conflict set $P^{conf}_{ij} = \{\, p \in P^{lit}_i \cap P^{lit}_j \mid M_{0,i}(p) \neq M_{0,j}(p) \}$. Fig.~\ref{fig:pn_patterns_combine_two_pn_mixed} shows a representative conflict between complementary literals. Each $p_l \in P^{conf}_{ij}$ is resolved by selecting a swap transition:
\[
t^*_l =
\begin{cases}
	t_{\bar{l},l}, & \text{if } M_{0,i}(p_l)=1,\\
	t_{l,\bar{l}}, & \text{otherwise},
\end{cases}
\]
aligning the marking of $H_j$ with $H_i$. Firing $t^*_l$ updates literal and guard-place markings and is admissible if $M(p^{sat}) \ge 1$. If admissible, both $p_l$ and $p_{\bar l}$ are removed from $P^{conf}_{ij}$; otherwise, a symmetric swap is attempted in the other net. If neither swap is admissible, the clauses are incompatible. Swaps continue until $P^{conf}_{ij} = \emptyset$, reducing the problem to Case~2 with respect to literal places. The composed marking is then given by:
\[
(M_{0,i} \oplus M_{0,j})(p)=
\begin{cases}
	M_i(p), & p \in P_i,\\
	M_j(p), & p \in P_j \setminus P_i.
\end{cases}
\]

Observable markings of binary composition are defined as:
\begin{equation*}
	\mathcal{M}^{\mathrm{obs}}_{\{i,j\}}
	= 
	\left\{
	M \in \mathcal{R}(M_{0,\{i,j\}})
	\;\middle|\;
	\forall p \in P^{sat}_{\{i,j\}} : M(p) \ge 1
	\right\},
\end{equation*}
where $P^{sat}_{\{i,j\}} := P^{sat}_i \cup P^{sat}_j$. Observable markings satisfy the composed specification.


\textit{Composition of a finite set of HyPNs.} 
It can be shown that the $\oplus$ operator is commutative and associative (see Section \ref{subsec:formal-properties-algebraic-properties}). Thus, given $\varphi^{CNF} = \bigwedge_{i=1}^{k} \varphi_i$, define $H_S = (P_S, T_S, F_S, W_S, M_{0,S}, \mathcal{M}^{\mathrm{obs}}_S) := \bigoplus_{i=1}^{k} H_i$ where each $H_i$ is synthesized from $\varphi_i$ using the procedure described in Section \ref{subsec:clause-subnet-synthesis}. Let $P^{sat}_S = \{p^{sat}_i \mid i=1..k\}$. The observable markings are defined as:
\[
\mathcal{M}^{\mathrm{obs}}_{S}
=
\left\{
M \in \mathcal{R}(M_{0,S})
\;\middle|\;
\forall p \in P^{sat}_{S} : M(p) \ge 1
\right\}.
\]
Observable markings satisfy $\varphi^{CNF}$, i.e., $H_S \models \varphi^{CNF}$.

\subsection{Example}
\label{subsec:modeling_synthesis_example}
Consider the constraint $\varphi = (a \Leftrightarrow b)$ with CNF clauses $\varphi_1 = \bar a \lor b$ and $\varphi_2 = a \lor \bar b$, where $S_1 = \{\bar a, b\}$, $S_2 = \{a, \bar b\}$, and $V_1 = V_2 = \{a, b\}$. Clause-level synthesis results in HyPNs $H_1$ and $H_2$, whose composition $H_{\{1,2\}} = H_1 \oplus H_2$ is shown in Fig.~\ref{fig:hypn_example_modeling_synthesis}, with initial marking $M_{0,\{1,2\}} = (1,0,1,0,1,1)$. Two swap operations resolve conflicting literal markings. The observable markings are $\mathcal{M}^{\mathrm{obs}}_{\{1,2\}} = \{(1,0,1,0,1,1), (0,1,0,1,1,1)\}$, which coincide with the satisfying valuations of $\varphi$ (see Fig.~\ref{fig:hypn_example}), with all clause-satisfaction places marked, while other reachable markings are non-observable. From $M_{0,\{1,2\}}$, any single transition leads to a non-observable marking, and observable markings are reached only after sequences of firings satisfying all clauses.

\begin{figure}
	\centering
	\includegraphics[width=1.0\linewidth]{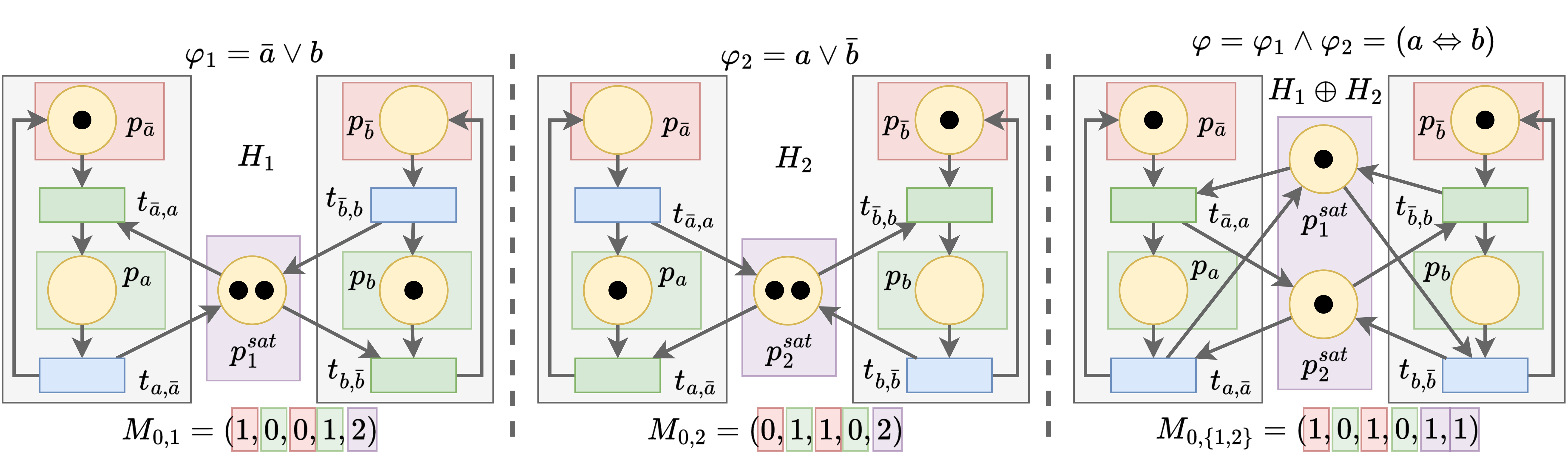}
	\caption{Clause-level HyPNs synthesized for $\varphi_1 = \bar a \lor b$ and $\varphi_2 = a \lor \bar b$, and their composition $H_1 \oplus H_2$, yielding a HyPN encoding $\varphi = a \Leftrightarrow b$, i.e., $H_{\{1,2\}} \models (a \Leftrightarrow b)$.}
	\label{fig:hypn_example_modeling_synthesis}
\end{figure}

\subsection{Discussion on structural limitations}
\label{subsec:synthesis-discussion}

While the synthesis yields a compact PN structure, it intentionally does not enforce constraint satisfaction at all reachable markings, thereby avoiding the combinatorial blow-up associated with explicit modeling of literal-level dynamics. For the disjunctive constraint $\varphi = a_1 \lor a_2 \lor \dots \lor a_n$, where $n = |V|$, all Boolean valuations except the all-false one are admissible, resulting in $2^n - 1$ states. Enforcing direct transitions between any two admissible states induces a complete directed graph, yielding $(2^n - 1)(2^n - 2)$ transitions. This exponential growth renders modeling of literal dynamics impractical even for moderate $n$. In contrast, the proposed synthesis yields a linear-size PN with $2n$ transitions by abstracting away explicit inter-literal transitions and delegating their resolution to execution semantics. As a result, intermediate markings may temporarily violate the constraint, while correctness is preserved at observable markings. Such intermediate violations are essential for scalability. Rather than enforcing logical satisfaction at every reachable marking, correctness is ensured at the execution level: clause-level guard places combined with macro-step execution semantics permit temporary violations while guaranteeing that all observable markings satisfy $\varphi$. 		

\section{Formal properties and scalability analysis}
\label{sec:formal-properties}

We restrict the analysis to a subclass of HyPNs corresponding to the encoding of logical subformulas introduced in Section~\ref{sec:model-synthesis}. Let $\mathcal{H}$ denote the class of such HyPNs. Each clause-generated subnet, denoted $H_i$, belongs to $\mathcal{H}$.

This section analyzes the formal properties of the HyPN composition operator $\oplus$ introduced in Section~\ref{subsec:pn-composition}. We establish its algebraic properties, ensuring that HyPNs synthesized from individual constraints can be composed modularly and in arbitrary order. We then examine the semantic relationship between the synthesized HyPN and the logical constraint formula it encodes, showing that observable markings correspond exactly to satisfying valuations. Finally, we briefly discuss scalability and structural complexity of the resulting networks.

\subsection{Algebraic properties of the composition operator}
\label{subsec:formal-properties-algebraic-properties}

\begin{lemma}[Well-defined initial marking composition]
\label{lemma:marking-composition}
Let $H_i, H_j \in \mathcal{H}$. Then the composed initial marking $M_{0,\{i,j\}} = M_{0,i} \oplus M_{0,j}$ is a well-defined marking over $P_i \cup P_j$.
\end{lemma}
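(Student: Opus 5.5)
The proof is a case analysis that mirrors the three cases in the definition of $\oplus$. For each case we must show two things: every place $p \in P_i \cup P_j$ receives exactly one value, and that value lies in $\mathbb{N}$. We begin from the structure of $\mathcal{H}$. The sets $P^{sat}_i$ and $P^{sat}_j$ are disjoint, and guard places never coincide with literal places. Hence $P_i \cap P_j \subseteq P^{lit}_i \cap P^{lit}_j$, so the only places where the two defining clauses could disagree are shared literal places. Everywhere else the composed marking is simply the restriction of $M_{0,i}$ or of $M_{0,j}$, which is a map into $\mathbb{N}$.

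\emph{Case 1.} $P_i \cap P_j = \emptyset$, so the two branches of the definition have disjoint domains. Their union therefore defines a function on $P_i \cup P_j$.

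\emph{Case 2.} On every shared literal place $p$ we have $M_{0,i}(p) = M_{0,j}(p)$ by hypothesis, so both branches assign the same value. The union is again a single-valued function.

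\emph{Case 3.} This is the main obstacle. Before the swaps are performed, the markings disagree on $P^{conf}_{ij}$. We need to argue three things: the swap procedure terminates, each individual swap produces a marking in $\mathbb{N}^{P}$, and once the procedure ends the modified markings $M_i, M_j$ agree on all shared places. The final formula, which gives priority to $M_i$ on $P_i$, then yields a total function that moreover does not depend on that priority choice.

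Termination follows because each successful swap removes the complementary pair $p_l, p_{\bar l}$ from the finite set $P^{conf}_{ij}$. If neither the swap in $H_i$ nor the symmetric swap in $H_j$ is admissible, the clauses are declared incompatible and no marking is produced. The lemma must therefore be read as conditional on compatibility, and we will state this explicitly.

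Each individual swap $t^*_l$ is fired only when it is enabled in the standard sense, including the guard condition $M(p^{sat}) \ge 1$. The resulting marking is therefore in $\mathcal{R}(M_{0,i})$ (respectively $\mathcal{R}(M_{0,j})$), and in particular takes values in $\mathbb{N}$. It also preserves the invariants $M(p_x) + M(p_{\bar x}) = 1$ and $\sum_{l \in S} M(p_{\bar l}) + M(p^{sat}) = |S|$.

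The $0/1$ literal invariant implies that after the swap $p_l$ and $p_{\bar l}$ agree across the two nets. A swap only changes places of the form $p_l$, $p_{\bar l}$ and guard places of the net it is fired in. Since guard places are unshared, no previously agreeing shared place becomes conflicting. By induction on the number of swaps, $P^{conf}_{ij} = \emptyset$ at termination. We are then in the situation of Case 2 for the updated markings, and the argument given there shows that the final formula is well defined on $P_i \cup P_j$.
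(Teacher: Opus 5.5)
Your Case~3 checks are correct and more detailed than the paper's argument. You show that the swap procedure terminates, that each fired swap is an enabled firing (so all values stay in $\mathbb{N}$), and that the $0/1$ literal invariant plus disjointness of guard places makes the updated markings agree on $P_i \cap P_j$ once $P^{conf}_{ij}=\emptyset$. The paper argues more simply. It partitions $P_i \cup P_j$ into $P_i\setminus P_j$, $P_j\setminus P_i$ and $P_i\cap P_j$. It then attributes uniqueness on shared places to a deterministic swap policy (a feasible swap in $H_j$ first, otherwise in $H_i$), and it explicitly allows an infeasible conflict to \emph{remain unresolved}.

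That contrast shows two places where you prove something different from the lemma.

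\textbf{Unneeded compatibility hypothesis.} The Case~3 formula takes $M_i(p)$ on all of $P_i$ and $M_j(p)$ on $P_j\setminus P_i$. It is therefore a total single-valued map even if some conflicts survive, as you yourself note. So the hypothesis can be dropped, and the lemma holds unconditionally. Your agreement argument only adds the extra fact that, in the compatible case, the priority given to $M_i$ is immaterial.

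\textbf{No argument that $\oplus$ is a function.} You never show that the swapped markings $M_i,M_j$ are determined by $M_{0,i},M_{0,j}$, which is the point the paper's proof rests on. Even the paper's tie-break does not settle this, because the result depends on the order in which conflicting variables are processed. Take $H_1\oplus H_2$ in Scenario~B, with clauses $\bar r\lor\bar d$ and $r\lor d$. Handling $r$ first fires $t_{r,\bar r}$ in $H_2$ and then forces $t_{\bar d,d}$ in $H_1$, giving $r=0,d=1$. Handling $d$ first gives $r=1,d=0$. For genuine uniqueness, fix a processing order (e.g.\ a total order on $V$) as part of the policy.

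\textbf{Smaller point.} Admissibility is not standard enabledness. In the same example, $t_{d,\bar d}$ is enabled in $H_2$ with $M(p^{sat}_2)=1$, yet it is rejected because it empties $p^{sat}_2$. Read admissibility as the guard staying $\ge 1$ after firing. This still implies enabledness, but you should derive that the source literal place is marked from the conflict and the $0/1$ invariant rather than assume it.
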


\begin{proof}
It suffices to show that $(M_{0,i} \oplus M_{0,j})(p)$  is uniquely defined for every $p \in P_i \cup P_j$. If $p \in P_i \setminus P_j$ (resp. $p \in P_j \setminus P_i$), the value is inherited from $M_{0,i}(p)$ (resp. $M_{0,j}(p)$), while for $p \in P_i \cap P_j$ it is given by the literal-composition rule. Conflicts are resolved using the swap mechanism using a deterministic policy: a feasible swap in $H_j$ is applied, otherwise in $H_i$, and if neither is feasible, the conflict remains unresolved. Hence, $(M_{0,i} \oplus M_{0,j})(p)$ is uniquely determined for all $p \in P_i \cup P_j$, and $M_{0,\{i,j\}}$ is a valid marking over $P_i \cup P_j$.
\end{proof}


\begin{theorem}[Closure]
	\label{theorem:closure}
Let $H_i, H_j \in \mathcal{H}$. Then $H_i \oplus H_j \in \mathcal{H}$.
\end{theorem}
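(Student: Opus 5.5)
The plan is to pin down what membership in $\mathcal{H}$ means and then check each defining feature for $H_i \oplus H_j$. The paper describes $\mathcal{H}$ only informally, as the HyPNs encoding logical subformulas via the construction of Section~\ref{sec:model-synthesis}. I will therefore use an explicit structural characterization, closed under the composition cases above. A net $H=(P,T,F,W,M_0,\mathcal{M}^{obs})$ belongs to $\mathcal{H}$ if it satisfies the following four conditions.

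\emph{(i) Place partition.} $P = P^{lit} \cup P^{sat}$ with $P^{lit} = \{p_x, p_{\bar x} \mid x \in V_H\}$ for some variable set $V_H$, and $P^{sat}$ is a set of guard places, each indexed by a clause $\varphi_k$ with literal set $S_k$.

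\emph{(ii) Transitions and arcs.} $T = \{t_{x,\bar x}, t_{\bar x,x} \mid x \in V_H\}$. The arcs are the unit-weight toggle arcs for each $x$, together with, for each guard $p_k^{sat}$ and each $l \in S_k$, the arcs $(p_k^{sat}, t_{l,\bar l})$ and $(t_{\bar l,l}, p_k^{sat})$.

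\emph{(iii) Initial marking.} $M_0$ satisfies $M_0(p_x)+M_0(p_{\bar x})=1$ for every $x \in V_H$. For every guard, $M_0(p_k^{sat}) = |S_k| - \sum_{l\in S_k} M_0(p_{\bar l})$, which is the clause-level invariant.

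\emph{(iv) Observable markings.} $\mathcal{M}^{obs} = \{M \in \mathcal{R}(M_0) \mid \forall p \in P^{sat}: M(p)\ge 1\}$.

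First I check that the clause-level nets $H_i$ satisfy (i)--(iv). This is immediate from the definitions in Section~\ref{subsec:clause-subnet-synthesis}.

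Next come the structural components $P_i\cup P_j$, $T_i\cup T_j$, $F_i\cup F_j$ and $W_i\cup W_j$. Unions of sets of the prescribed form are again of that form, with $V = V_i \cup V_j$ and $P^{sat} = P^{sat}_i \cup P^{sat}_j$; the two guard sets are disjoint, as noted in Section~\ref{subsec:pn-composition}. Shared places and transitions are identified by name. The arc relations agree on shared transitions, because the toggle arcs depend only on the variable. So $W_i\cup W_j$ is a well-defined function, identically $1$. This gives (i) and (ii).

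For the marking I invoke Lemma~\ref{lemma:marking-composition}, which guarantees that $M_{0,\{i,j\}}$ is a well-defined marking on $P_i \cup P_j$. It remains to verify the invariants in (iii), and I split by the three cases of the composition.

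\emph{Cases 1 and 2.} The composed marking restricts to $M_{0,i}$ on $P_i$ and to $M_{0,j}$ on $P_j$, consistently on the overlap. Every invariant involves only places of a single component, so each invariant is inherited.

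\emph{Case 3.} Each swap is the firing of a transition $t^*_l$ in the component net. Firing preserves the place invariants of a PN, since these invariants are $P$-semiflows of the incidence matrix: the toggle pair conserves $p_x+p_{\bar x}$, and the guard arcs conserve $\sum_{l\in S_k} p_{\bar l} + p_k^{sat}$. Hence after all swaps each component marking still satisfies its invariants, and we are reduced to Case~2.

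The delicate point is that a single swap transition $t^*_l$ may touch several guards if the variable occurs in several clauses. I would handle this by noting that the invariant is linear and each arc contributes exactly a $\pm 1$ balanced pair. The remaining subtlety is the ``incompatible'' branch, where no swap is feasible. I would either restrict the closure claim to compatible pairs, or argue that the lemma's deterministic policy still yields a marking satisfying the invariants even with unresolved conflict.

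Finally, (iv) holds by the very definition of $\mathcal{M}^{obs}_{\{i,j\}}$ over $P^{sat}_{\{i,j\}}$. I expect the main obstacle to be the Case~3 argument: showing that invariant preservation survives sequential swaps, and dealing honestly with the incompatible case. I would first try to prove the invariant-preservation statement as a small standalone claim about firings in $\mathcal{H}$-nets, and then apply it to the swaps.
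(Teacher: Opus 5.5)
Your skeleton matches the paper's: set unions for the structure, Lemma~\ref{lemma:marking-composition} for the initial marking, and observability by definition over $P^{sat}_i \cup P^{sat}_j$. The difference is how you certify membership in $\mathcal{H}$. The paper reads $\mathcal{H}$ semantically. From $H_i \models \varphi_i$, $H_j \models \varphi_j$ and the definition of $\mathcal{M}^{obs}_{\{i,j\}}$, it concludes directly that every observable marking satisfies $\varphi_i \land \varphi_j$. That step tacitly needs $\sum_{l \in S_k} M(p_{\bar l}) + M(p_k^{sat}) = |S_k|$ at every reachable marking of the composed net. The merged transitions now act on guards of both components, so $H_i \models \varphi_i$ alone says nothing about markings of $H_i \oplus H_j$. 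Your condition (iii) plus the $P$-semiflow argument is exactly what justifies that step, so your route is the more rigorous one. You should add the one-line consequence: at every reachable $M$, $M(p_k^{sat})$ equals the number of literals of $\varphi_k$ true under $v_M$. This recovers the paper's conclusion $H_i \oplus H_j \models \varphi_i \land \varphi_j$.

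On the incompatible branch, only your first option works. Take $\varphi_i = x$ and $\varphi_j = \bar x$. Both candidate swaps ($t_{\bar x,x}$ in $H_j$, $t_{x,\bar x}$ in $H_i$) would empty the only token of a guard. Scenario~B shows that admissibility means the guard stays $\ge 1$, so the conflict stays unresolved. The Case~3 formula then yields $M(p_x)=1$, $M(p_{\bar x})=0$, $M(p_i^{sat})=M(p_j^{sat})=1$. Here $M(p_x)+M(p_j^{sat})=2\neq |S_j|$, so (iii) fails. Moreover, this is the only observable marking of the composed net, and its valuation violates $\bar x$. So under the convention of Lemma~\ref{lemma:marking-composition}, closure fails in that branch even in the paper's semantic sense, and the paper's proof does not address this case. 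You need a hypothesis here, and this is not a defect of your argument. Either restrict the statement to pairs whose conflicts are all resolved, or recompute the guard markings from the invariant when a conflict remains. The second choice gives $M(p_j^{sat})=0$ and an empty $\mathcal{M}^{obs}$, as it should.
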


\begin{proof}
Let $H_k = H_i \oplus H_j$. It suffices to show that the composition preserves the constraint-structured form. By definition of $\oplus$, $H_k = (P_i \cup P_j,\; T_i \cup T_j,\; \dots,\; M_{0,i} \oplus M_{0,j},\; \mathcal{M}^{obs}_{\{i,j\}})$. Thus, the structural components are preserved, and by Lemma~\ref{lemma:marking-composition}, the initial marking is well-defined. Since $H_i, H_j \in \mathcal{H}$, $H_i \models \varphi_i$ and $H_j \models \varphi_j$. By definition, $\mathcal{M}^{obs}_{\{i,j\}} =  \{ M \in R(M_{0,\{i,j\}}) \mid \forall p \in (P^{sat}_i \cup P^{sat}_j): M(p) \ge 1 \}$. Thus, every observable marking satisfies both clause conditions, hence $\varphi_i \land \varphi_j$. Therefore, $H_k \models (\varphi_i \land \varphi_j)$, and $H_k \in \mathcal{H}$.
\end{proof}
	
\begin{theorem}[Commutativity] 
	\label{theorem:commutativity}
	Let $H_i, H_j \in \mathcal{H}$. Then $H_i \oplus H_j$ and $H_j \oplus H_i$ are behaviorally equivalent ($H_i \oplus H_j \approx H_j \oplus H_i$).
\end{theorem}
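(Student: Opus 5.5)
The plan is to verify the three conditions of Definition~\ref{definition:behavioral-equivalence} in turn for $H_{ij} := H_i \oplus H_j$ and $H_{ji} := H_j \oplus H_i$. The approach is to separate the purely structural part of $\oplus$, which is built from symmetric set unions, from the marking part, where the swap policy of Lemma~\ref{lemma:marking-composition} depends on the order of the operands.

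\emph{Step 1 (structural isomorphism).} The underlying nets are $(P_i \cup P_j, T_i \cup T_j, F_i \cup F_j, W_i \cup W_j)$ and $(P_j \cup P_i, T_j \cup T_i, F_j \cup F_i, W_j \cup W_i)$. Since union is commutative, these are literally equal. I take $\phi_P$ and $\phi_T$ to be the identity maps. Arcs and weights are then trivially preserved, and the weights are consistent on shared arcs because every arc has unit weight. This gives condition~1, with the identity as the isomorphism, so projection of markings is the identity as well.

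\emph{Step 2 (mutual reachability of initial markings).} Here $M_{0,\{i,j\}}$ and $M_{0,\{j,i\}}$ can differ only on places touched by Case~3 swaps. In $H_{ij}$ the conflicts are resolved toward $H_i$'s literal values, and in $H_{ji}$ toward $H_j$'s. The key observation is that every swap transition $t_{l,\bar l}$ has a reverse $t_{\bar l,l}$ in the same (union) net. Starting from $M_{0,\{i,j\}}$, I would fire, for each resolved conflict variable $x$, the opposite swap $t_{x,\bar x}$ or $t_{\bar x,x}$. This moves the literal token and adjusts the guard places exactly as the alternative alignment would, so the result is $M_{0,\{j,i\}}$; by symmetry the converse also holds. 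I must check enabledness at each step. The literal place is marked by the invariant $M(p_x)+M(p_{\bar x})=1$. For the guard places $p^{sat}_k$ consumed, I plan to use the clause invariant $\sum_{l\in S_k} M(p_{\bar l}) + M(p^{sat}_k) = |S_k|$. This invariant shows that a guard place is empty only when all literals of its clause are false. I would order the reverse swaps so that tokens are first returned to guard places, by firing transitions $t_{\bar l,l}$ that produce into them, before consuming. Note that intermediate markings need not be observable for reachability in $\mathcal{R}(\cdot)$.

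\emph{Step 3 (equal observable behaviour).} I first need $\mathcal{R}(M_{0,\{i,j\}}) = \mathcal{R}(M_{0,\{j,i\}})$, which follows from Step~2 by transitivity of reachability in the common underlying net. The observable sets are defined by the same predicate $\forall p \in P^{sat}_i \cup P^{sat}_j: M(p)\ge 1$ over the same places, so $\mathcal{M}^{obs}_{\{i,j\}} = \mathcal{M}^{obs}_{\{j,i\}}$. The relation $\Rightarrow$ depends, by Algorithm~\ref{alg:sigma_h} and the definition of $\Sigma_H$, only on the net and on $\mathcal{M}^{obs}$, so $\Sigma_H$ and $\Rightarrow$ coincide. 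The initial states of $\mathcal{S}_{HyPN}$ may differ, but both are observable and mutually reachable. Condition~2 is designed to allow exactly this.

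\emph{Main obstacle.} I expect Step~2 to be the hardest part, namely showing that the reverse swaps are actually firable in the union net when guard places are shared among several clauses. My first attempt would be to argue variable by variable. Each conflict swap originally succeeded because the relevant guard place had a token, and in the opposite ordering the symmetric swap was feasible by assumption: otherwise the clauses would be declared incompatible and neither composition would be defined. This makes the reversal feasible too. If that fails, I would use a fallback: the full literal-switching subnet is reversible, since every transition has an inverse and the guard places are only counters. This suggests every reachable marking lies in a single strongly connected component, from which mutual reachability follows directly.
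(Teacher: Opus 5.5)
Your proposal is correct and follows essentially the paper's route: identical net structure by commutativity of union, complementary swap transitions reconciling the two composed initial markings, and the order-independent guard-place predicate giving the same observable set, while being more careful than the paper about enabledness and about $\Sigma_H$ and $\Rightarrow$ coinciding. The enabledness obstacle in fact dissolves from your own clause invariant: $t_{l,\bar l}$ consumes only from guard places $p^{sat}_k$ with $l \in S_k$, and since $p_l$ is marked when it fires, $l$ is true and each such $p^{sat}_k$ holds a token, so the differing variables can be flipped in any order and no ordering heuristic is needed; you should also drop the ``first attempt'', whose claim that an infeasible swap makes both compositions undefined is not justified, since the greedy swap policy modifies different operands in $H_i \oplus H_j$ and $H_j \oplus H_i$.
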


\begin{proof}
	The composition operator $\oplus$ is defined component-wise. Since set union is commutative, the structural components satisfy $P_i \cup P_j = P_j \cup P_i,\quad T_i \cup T_j = T_j \cup T_i,\quad F_i \cup F_j = F_j \cup F_i,\quad W_i \cup W_j = W_j \cup W_i$, and hence the composed nets have identical structure. 
	The composed initial markings $M_{0,i} \oplus M_{0,j}$ and $M_{0,j} \oplus M_{0,i}$ may differ due to conflict resolution, including in the markings of satisfaction places. However, for each literal conflict, the composed net contains complementary swap transitions $t_{l,\hat{l}}$ and $t_{\hat{l},l}$ (cf. $t^*_l$ in Case 3 of HyPN composition) induced by $T_i \cup T_j$. Hence, differences between the composed initial markings can be resolved by firing appropriate complementary swap transitions.
	
	By definition of observable markings,
	$\mathcal{M}^{obs}_{\{i,j\}} = \{ M \in R(M_{0,\{i,j\}}) \mid \forall p \in (P^{sat}_i \cup P^{sat}_j): M(p) \ge 1 \}$,
	which is invariant under permutation of $i$ and $j$. Since the two compositions have identical structure and differ due to conflict resolution handled by swap transitions, they yield identical sets of observable markings. Hence, $H_i \oplus H_j \approx H_j \oplus H_i$.
\end{proof}

\begin{theorem}[Associativity]
	\label{theorem:associativity}
	Let $H_i, H_j, H_k \in \mathcal{H}$. Then $(H_i \oplus H_j) \oplus H_k \approx H_i \oplus (H_j \oplus H_k)$.
\end{theorem}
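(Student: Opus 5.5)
The plan follows the proof of Theorem~\ref{theorem:commutativity}, checking in turn the three conditions of Definition~\ref{definition:behavioral-equivalence}: structural isomorphism, mutual reachability of the projected initial markings, and equality of the observable transition systems.

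\textbf{Structure.} By Theorem~\ref{theorem:closure}, both $(H_i \oplus H_j) \oplus H_k$ and $H_i \oplus (H_j \oplus H_k)$ belong to $\mathcal{H}$, so both sides are well defined. Since $\oplus$ acts componentwise by set union on $P$, $T$, $F$ and $W$, associativity of union gives identical components
\[
P_i \cup P_j \cup P_k,\quad T_i \cup T_j \cup T_k,\quad F_i \cup F_j \cup F_k,\quad W_i \cup W_j \cup W_k .
\]
Weights agree on shared arcs because every arc has unit weight. Hence the identity maps $\phi_P,\phi_T$ give the required structural isomorphism. The satisfaction places are also identical, $P^{sat} = \{p^{sat}_i, p^{sat}_j, p^{sat}_k\}$, because guard places of distinct clauses are disjoint.

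\textbf{Initial markings.} Let $M_L$ and $M_R$ denote the initial markings of the left and right bracketings. They can differ only because Case~3 resolves conflicts in a different order. The key step is to establish two invariants of every clause net in $\mathcal{H}$. The first is $M(p_x)+M(p_{\bar x})=1$ for every variable $x$. The second is the clause invariant $\sum_{l\in S_c} M(p_{\bar l}) + M(p^{sat}_c) = |S_c|$ for every clause $c$. Both are preserved by $\oplus$, since every transition in the union keeps each of them balanced. It follows that any marking of the composed net is determined by its literal part, so $M_L$ and $M_R$ differ only in the polarity of finitely many variables. For each differing variable $x$, the union contains both swap transitions $t_{x,\bar x}$ and $t_{\bar x,x}$. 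I would fire these in sequence to move from $M_L$ to $M_R$, and the reverse sequence to move back, which gives $M_R\in\mathcal{R}(M_L)$ and $M_L\in\mathcal{R}(M_R)$.

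\textbf{Main obstacle: enabledness of the swaps.} A swap $t_{l,\bar l}$ consumes a token from $p^{sat}_c$ for each clause $c$ containing $l$. It is therefore disabled whenever some such $p^{sat}_c$ is empty, which happens when all literals of $c$ are currently false. My plan is to order the swaps so that first all swaps of the form $t_{\bar l,l}$ are fired, which make literals true and only add guard tokens. Only after that would I fire the swaps that make literals false. Because both $M_L$ and $M_R$ arise from resolution steps that were admissible, meaning they respected $M(p^{sat})\ge 1$, this ordering should keep every guard place nonnegative. If that fails in general, I would instead argue within the reachability set of the full union net. That set is closed under the swap moves wherever the guard counts allow them, so it suffices to show that $M_L$ and $M_R$ lie in the same strongly connected component of the reachability graph.

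\textbf{Observable behavior.} Once mutual reachability holds, $\mathcal{R}(M_L)=\mathcal{R}(M_R)$. Since the two nets have the same structure and the same $P^{sat}$, the filter $\forall p\in P^{sat}: M(p)\ge 1$ produces the same observable set $\mathcal{M}^{obs}$. The relation $\Rightarrow$ depends only on the net, on $\mathcal{M}^{obs}$, and on the firing rule, through the definition of $\Sigma_H$. It therefore also coincides, and this establishes the third condition of Definition~\ref{definition:behavioral-equivalence}.
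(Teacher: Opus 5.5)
Your decomposition is exactly the paper's: identical structure by associativity of union, initial markings $M_L,M_R$ differing on finitely many literal pairs, complementary swaps connecting them, hence $\mathcal{R}(M_L)=\mathcal{R}(M_R)$ and equal $\mathcal{M}^{obs}$ and $\Rightarrow$. The paper simply asserts that the swaps can be fired (``although they affect satisfaction places, they correct the differences''). You are right that enabledness is the step needing an argument, but the argument you offer does not supply it. A swap $t_{x,\bar x}$ consumes a token from $p^{sat}_c$ for every clause with $x\in S_c$. It produces one in $p^{sat}_c$ for every clause with $\bar x\in S_c$. So whenever both polarities of a variable occur in the formula (as for $r$, $d$, $e_1$ in Scenario~B), no swap on that variable is purely additive. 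Moreover, every swap is $t_{\bar l,l}$ for the literal $l$ it makes true, so ``first fire the $t_{\bar l,l}$'' does not single out a harmless first phase. The appeal to admissibility of the Case~3 resolution steps concerns different firing sequences, not the path from $M_L$ to $M_R$. The fallback of showing that $M_L$ and $M_R$ lie in one strongly connected component is the claim itself restated.

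The missing step follows in one line from the two invariants you already isolate, and your own remark contains it. The place $p^{sat}_c$ is empty only when all literals of $c$ are false, whereas firing $t_{l,\bar l}$ requires $p_l$ to be marked, i.e.\ $l$ true. Formally, substituting $M(p_{\bar l})=1-M(p_l)$ into the clause invariant gives $M(p^{sat}_c)=\sum_{l\in S_c}M(p_l)$. So if $M(p_l)=1$, every input guard place of $t_{l,\bar l}$ (those $p^{sat}_c$ with $l\in S_c$) holds at least one token. Thus $t_{l,\bar l}$ is enabled exactly when $p_l$ is marked, and guard places never block a swap. Since firing preserves both invariants, you may flip the differing variables one at a time in any order, from $M_L$ to $M_R$ and back. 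This does need $M_L$ and $M_R$ themselves to satisfy the invariants. That follows not from transitions being balanced, but from Case~3 gluing the markings only after the shared literal places are aligned by swaps, with pairwise disjoint guard places (assuming every conflict is resolved). With this in place your proof is complete. It in fact shows that all $2^{|V|}$ invariant-respecting markings are reachable from either initial marking.
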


\begin{proof}
	The structural components are obtained by set unions: $(P_i \cup P_j) \cup P_k = P_i \cup (P_j \cup P_k)$, and	$(T_i \cup T_j) \cup T_k = T_i \cup (T_j \cup T_k)$,	and analogously for arcs and weights. Hence, both $(H_i \oplus H_j) \oplus H_k$ and $H_i \oplus (H_j \oplus H_k)$ yield the same net structure and satisfaction places $P_i^{sat} \cup P_j^{sat} \cup P_k^{sat}$. Let $M_{0,L}$ and $M_{0,R}$ be the initial markings of $(H_i \oplus H_j) \oplus H_k$ and $H_i \oplus (H_j \oplus H_k)$, respectively. They may differ due to different conflict-resolution orders on shared literal places, but both satisfy $M(p_l)+M(p_{\bar l})=1$ for each $(p_l,p_{\bar l})$.
	
	The difference between $M_{0,L}$ and $M_{0,R}$ is a finite set of literal pairs with different markings. For each pair, the marking can be changed by firing a complementary swap transition $t_{l,\bar l}$ or $t_{\bar l,l}$. They preserve the invariant and modify a single pair $(p_l,p_{\bar l})$. Although they affect satisfaction places, they correct the differences between literal pairs. Hence, there is a finite firing sequence from $M_{0,L}$ to $M_{0,R}$, and by symmetry from $M_{0,R}$ to $M_{0,L}$, implying $\mathcal{R}(M_{0,L})=\mathcal{R}(M_{0,R})$.
		
	$\mathcal{M}^{obs} = \{\, M \in \mathcal{R}(M_0) \mid \forall p \in (P_i^{sat}\cup P_j^{sat}\cup P_k^{sat}) : M(p)\geq 1 \,\}$.	Since both share the same structure and reachable markings, they induce the same set of observable markings. Hence, $(H_i \oplus H_j)\oplus H_k \approx H_i \oplus (H_j \oplus H_k)$.
\end{proof}

\begin{theorem}[Order independence of composition]
	\label{theorem:permutation}
	Let $H_1,\ldots,H_n \in \mathcal{H}$. For permutation $\pi$ of $\{1,\ldots,n\}$, the composed net: $H_{\pi(1)} \oplus H_{\pi(2)} \oplus \cdots \oplus H_{\pi(n)} \approx H_1 \oplus H_2 \oplus \cdots \oplus H_n$.
\end{theorem}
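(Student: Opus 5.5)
The plan is the standard one for commutative and associative operations: reduce the permuted composition to the identity ordering using Theorems~\ref{theorem:commutativity} and~\ref{theorem:associativity}. First, I will fix the parenthesization convention. I read the $n$-fold expression as left-nested, $(\cdots((H_{\pi(1)}\oplus H_{\pi(2)})\oplus H_{\pi(3)})\cdots)\oplus H_{\pi(n)}$. Theorem~\ref{theorem:closure} ensures that every intermediate composite lies in $\mathcal{H}$, so the binary theorems apply at every stage.

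Second, I will show that $\approx$ is an equivalence relation and that it is a congruence for $\oplus$. Reflexivity and symmetry follow directly from Definition~\ref{definition:behavioral-equivalence}. For transitivity, I would compose the structural bijections and chain the two reachability conditions. I expect the congruence property to be the main obstacle: if $A\approx A'$, I need $A\oplus B\approx A'\oplus B$.

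For the congruence step I would argue as in the proof of Theorem~\ref{theorem:associativity}. The underlying structures of $A\oplus B$ and $A'\oplus B$ are the same set unions, up to the isomorphism between $A$ and $A'$. The two composed initial markings both respect $M(p_l)+M(p_{\bar l})=1$ on every literal pair. They can therefore differ only on finitely many literal pairs, and each such difference is corrected by firing the corresponding swap transition $t_{l,\bar l}$ or $t_{\bar l,l}$, which lies in the union. This makes each initial marking reachable from the other, so the reachable sets coincide. Since the observable sets are defined uniformly by the condition $M(p)\ge 1$ on all satisfaction places, they coincide as well, and hence so do the induced relations $\Rightarrow$. This step inherits whatever implicit assumption Theorem~\ref{theorem:associativity} makes about swaps being firable, and I would invoke it in the same way rather than reprove it.

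Third, I will do induction on $n$. The cases $n=1$ and $n=2$ are trivial or given by Theorem~\ref{theorem:commutativity}. Every permutation is a product of adjacent transpositions, so it suffices to show invariance under swapping positions $k$ and $k+1$. Write the prefix as $A=H_{\pi(1)}\oplus\cdots\oplus H_{\pi(k-1)}$. By Theorem~\ref{theorem:associativity}, $(A\oplus H_{\pi(k)})\oplus H_{\pi(k+1)}\approx A\oplus(H_{\pi(k)}\oplus H_{\pi(k+1)})$. Theorem~\ref{theorem:commutativity} and the congruence property then swap the inner pair, and associativity is applied back. The congruence property also propagates this local equivalence through the remaining right factors $H_{\pi(k+2)},\dots,H_{\pi(n)}$. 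Chaining these steps by transitivity gives the claim.
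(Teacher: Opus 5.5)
Your argument follows the paper's own route (adjacent transpositions together with Theorems~\ref{theorem:commutativity} and~\ref{theorem:associativity}), and it is more complete than the paper's one-line proof, because you prove the transitivity of $\approx$ and its compatibility with $\oplus$ explicitly, whereas the paper uses both silently. Two small tightenings are needed: first, state the congruence lemma only for equivalences whose structural bijection is the identity, since $\oplus$ glues places by name and a nontrivial renaming can break $A\oplus B\approx A'\oplus B$ (e.g.\ $x\lor y$ versus $x\lor z$, each composed with the unit clause $\bar y$), and this restriction costs nothing because commutativity and associativity only ever produce literally identical nets; second, the swap-firability you defer to is actually provable, because the clause invariant $\sum_{l\in S_i}M(p_{\bar l})+M(p_i^{sat})=|S_i|$ fixes every satisfaction place as a function of the literal marking (which is what justifies your claim that the two initial markings differ only on literal pairs) and forces $M(p_i^{sat})\ge 1$ for every clause $i$ with $l\in S_i$ whenever $p_l$ is marked, so $t_{l,\bar l}$ is always enabled there.
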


\begin{proof}
	The result follows directly from Theorems~\ref{theorem:commutativity} and~\ref{theorem:associativity}, since permutation can be obtained by swapping neighbors, with associativity ensuring grouping is irrelevant.
\end{proof}

\subsection{Semantic properties of the synthesized HyPN}
\label{subsec:formal-properties-semantic-properties}

In this section we will use the following notation. For a marking $M \in \mathcal{M}^{\mathrm{obs}}$, we denote $m(M)$ by $v_M$. Also, we write $Val(\varphi) = \{v \in \mathcal I_V | v \models \varphi\}$.

\begin{theorem}[Soundness of HyPN encoding]
	\label{theorem:soundness}	
	Let $\varphi$ be the constraint formula encoded by a synthesized HyPN $H$. Then for every observable marking $M \in \mathcal{M}^{obs}$, the induced valuation $v_M$ satisfies $\varphi$, i.e., $M \in \mathcal{M}^{obs} \;\Rightarrow\; v_M \models \varphi$.
\end{theorem}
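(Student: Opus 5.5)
The plan is to reduce the statement to a clause-by-clause argument, using the structure of $\varphi^{\mathrm{CNF}} = \bigwedge_{i=1}^{k}\varphi_i$ and the definition of observable markings of the composed net $H = H_S = \bigoplus_{i=1}^{k} H_i$. Fix $M \in \mathcal{M}^{obs}_S$. By definition, $M \in \mathcal{R}(M_{0,S})$ and $M(p^{sat}_i) \ge 1$ for every $i$. It then suffices to show $v_M \models \varphi_i$ for each clause, since $\varphi$ and $\varphi^{\mathrm{CNF}}$ are logically equivalent.

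First I establish two place invariants of the composed net by the usual incidence-vector argument: every transition's pre- and post-set effect is known, and all weights are unit.

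\textbf{Literal invariant.} For each $x \in V$, $M(p_x)+M(p_{\bar x}) = 1$ on all reachable markings. Each $t_{x,\bar x}$ or $t_{\bar x,x}$ moves one token between $p_x$ and $p_{\bar x}$, and no other transition touches these places. The initial marking satisfies the sum condition after the swap-based alignment of Case~3 (Lemma~\ref{lemma:marking-composition}). This also makes $v_M(x)$ well defined as ``$M(p_x)=1$''.

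\textbf{Clause invariant.} For each clause $i$, $\sum_{l \in S_i} M(p_{\bar l}) + M(p^{sat}_i) = |S_i|$. The only arcs at $p^{sat}_i$ are the consuming arcs to $t_{l,\bar l}$ and the producing arcs from $t_{\bar l,l}$ for $l \in S_i$, so each firing changes $M(p^{sat}_i)$ and $\sum_{l\in S_i}M(p_{\bar l})$ by opposite amounts. Here the main obstacle appears: the composed initial marking must itself satisfy this invariant. I would argue that $M_{0,i}$ satisfies it by construction ($p_{\bar l}$ empty, $p^{sat}_i$ holding $|S_i|$ tokens). Each alignment swap in Case~3 is a genuine firing of a transition of the net that updates literal and guard places together, so the invariant is preserved through the composition steps. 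For the associativity and commutativity variants, this is exactly the reasoning used in Theorems~\ref{theorem:commutativity} and~\ref{theorem:associativity}.

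With the invariants in hand, take any clause $i$. Because $M(p^{sat}_i)\ge 1$, the clause invariant gives $\sum_{l\in S_i} M(p_{\bar l}) \le |S_i|-1$. Each $p_{\bar l}$ holds $0$ or $1$ token by the literal invariant, so some $l \in S_i$ has $M(p_{\bar l})=0$, hence $M(p_l)=1$. This means $v_M(l)$ is true: for a positive literal $l=x$, $p_x$ is marked, and for $l=\bar x$, $p_{\bar x}$ is marked, so $v_M(x)$ is false. Therefore $v_M\models\varphi_i$. Since $i$ was arbitrary, $v_M \models \varphi^{\mathrm{CNF}}$, and hence $v_M\models\varphi$.

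A degenerate case needs a short remark: a clause containing both $x$ and $\bar x$ is a tautology and is trivially satisfied. The variables of $V$ not occurring in any clause can be assigned arbitrarily without affecting satisfaction.
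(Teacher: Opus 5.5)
Your proof is correct and follows the paper's clause-by-clause route: observability gives $M(p^{sat}_i)\ge 1$ for every $i$, hence a true literal in each $\varphi_i$. Where the paper only says ``by construction'', you justify that step via the literal and clause place invariants, and you check that the Case~3 swaps keep the composed initial marking on those invariants (each swap is a firing inside one component, and after alignment each component's marking is the restriction of the composed one). One hypothesis should be stated explicitly, and the paper also assumes it tacitly: every conflict must actually be resolved, because if neither swap is admissible the composed marking takes the shared literal values from $H_i$ but keeps $H_j$'s guard count, and then the clause invariant for $\varphi_j$ need not hold.
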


\begin{proof}
	Let $\varphi = \bigwedge_{i \in S} \varphi_i$ be the CNF representation of the encoded formula, where each clause $\varphi_i$ is represented by a clause-specific subnet $H_i$ with clause-satisfaction place $p_i^{sat}$. Consider any observable marking $M \in \mathcal{M}^{obs}$, and let $v_M$ be the valuation induced by $M$. By definition of observable markings, $M(p_i^{sat}) \ge 1$ for all $i \in S$. By construction of each subnet $H_i$, the condition $M(p_i^{sat}) \ge 1$ implies that at least one literal in the clause $\varphi_i$ is true under $v_M$. Therefore, $v_M \models \varphi_i$ for all $i \in S$. Since $\varphi = \bigwedge_{i \in S} \varphi_i$, it follows that $v_M \models \varphi$.
\end{proof}

\begin{theorem}[Completeness of HyPN encoding]
	\label{theorem:completeness}
	Let $\varphi$ be the formula encoded by a synthesized HyPN $H$. Then every valuation satisfying $\varphi$ induces an observable marking, i.e., $v \models \varphi \;\Rightarrow\; \exists M_v \in \mathcal{M}^{obs}$.
\end{theorem}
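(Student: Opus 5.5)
Given a valuation $v \models \varphi$, I will build an explicit marking $M_v$ and show that it is reachable in the underlying net of $H=H_S$ and that every satisfaction place holds at least one token, so $M_v \in \mathcal{M}^{obs}_S$ with $v_{M_v}=v$.

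\textbf{Candidate marking.} For each variable $x \in V$, set $M_v(p_x)=1$ and $M_v(p_{\bar x})=0$ if $v(x)$ is true, and the reverse otherwise. For each clause $\varphi_i$, set
\[
M_v(p_i^{sat}) = |S_i| - \sum_{l_j \in S_i} M_v(p_{\bar l_j}).
\]
This is the value forced by the clause-level invariant of Sec.~\ref{subsec:clause-subnet-synthesis}. By construction the induced valuation is $v$.

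\textbf{Step 1: invariants of $H_S$.} Because composition only takes unions of places, transitions and arcs, every transition of $H_S$ is a swap $t_{x,\bar x}$ or $t_{\bar x,x}$. Each swap moves one token between $p_x$ and $p_{\bar x}$, and consumes or produces one token in the satisfaction places of exactly the clauses containing the corresponding literal. Hence two invariants hold on all reachable markings: $M(p_x)+M(p_{\bar x})=1$, and $\sum_{l_j\in S_i} M(p_{\bar l_j}) + M(p_i^{sat}) = |S_i|$ for every $i$. I must check that $M_{0,S}$ satisfies both. The swaps performed during composition (Case~3) are ordinary firings, so they preserve the invariants. I expect it to be cleanest to argue that $M_{0,S}$ is itself obtained from the disjoint clause markings by firings.

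\textbf{Step 2: reachability of $M_v$ (main obstacle).} Let $D$ be the set of variables on which $M_{0,S}$ and $M_v$ disagree. I will fire, for each $x\in D$, the appropriate swap. The difficulty is that a swap $t_{l,\bar l}$ with $l \in S_i$ consumes a token from $p_i^{sat}$, so it may be disabled at an intermediate marking. My plan is to order the firings so that all \emph{token-producing} flips (those turning a literal of some clause true) come first, and all consuming flips come after. Producing flips only take from the literal place, which holds its token by the first invariant. A swap $t_{x,\bar x}$ produces into the satisfaction places of clauses containing $\bar x$ and consumes from those containing $x$. So no single swap is purely producing unless $x$ occurs with only one polarity, and an order-based argument may fail. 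As a fallback, I will argue by counting. At the final marking, $M_v(p_i^{sat}) \ge 1$ for every $i$, because $v\models\varphi_i$ means some $p_{l_j}$ is marked and so $\sum_{l_j} M_v(p_{\bar l_j}) \le |S_i|-1$. At any intermediate marking the satisfaction counts are given by the invariant and are nonnegative. Enabledness of $t_{l,\bar l}$ only requires $M(p_i^{sat})\ge1$ for the clauses $i$ consuming from it, that is, that the clause still has at least one true literal before the flip. Since clauses have at most one occurrence per variable, this holds if we make clause literals true before making others false: first flip every $x\in D$ toward the value of $v$ whenever that flip does not falsify the last true literal of some clause, and proceed greedily.

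\textbf{Step 3: conclusion.} The greedy order terminates with $M_v$. Once $M_v$ is shown reachable, the count computation above shows every $p_i^{sat}$ is marked, so $M_v\in\mathcal{M}^{obs}_S$. This gives completeness, and together with Theorem~\ref{theorem:soundness} it yields the correspondence $\{v_M : M\in\mathcal{M}^{obs}\} = Val(\varphi)$.

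I expect the greedy step to be the crux. A cycle of clauses blocking each other could make it fail, and the proof must either rule this out or exploit the fact that $H$ permits temporarily violating intermediate markings only as long as satisfaction counts stay nonnegative.
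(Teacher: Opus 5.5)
Your plan is more careful than the paper's proof. The paper's proof consists only of your candidate marking and your Step~3 count: it uses the clause invariant to get $M_v(p_i^{sat})\ge 1$ and then declares $M_v\in\mathcal{M}^{obs}$ by definition. It never addresses $M_v\in\mathcal{R}(M_{0,S})$, which that definition requires. You rightly isolate reachability as the real content, but Step~2 does not establish it, and the greedy rule you propose does fail.

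Take the paper's own example $\varphi=(a\Leftrightarrow b)$ from Sec.~\ref{subsec:modeling_synthesis_example}, with clauses $\bar a\lor b$ and $a\lor\bar b$. Whichever of the two models $M_{0,\{1,2\}}$ encodes, each clause has exactly one true literal there. Flipping either variable falsifies the only true literal of one clause; the paper itself notes that every single firing from $M_{0,\{1,2\}}$ leaves $\mathcal{M}^{obs}$. So your rule cannot make a first move toward the other model. This is precisely the blocking cycle you were worried about, and it already occurs with two clauses.

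The missing observation is that the guard arcs never block a swap at a marking satisfying your two invariants, so no ordering is needed at all. Under those invariants, $M(p_i^{sat})$ equals the number of literals of $\varphi_i$ that are true in the valuation read off the literal places. The swap $t_{l,\bar l}$ has input places $p_l$ and $p_i^{sat}$ for exactly those $i$ with $l\in S_i$. If $p_l$ is marked, then $l$ itself is a true literal of each such clause, hence $M(p_i^{sat})\ge 1$. Your own reformulation (``the clause still has at least one true literal before the flip'') already contains this: the literal being flipped is that true literal.

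Consequently, a swap is enabled whenever its source literal place is marked, and firing it preserves both invariants. You may therefore flip the variables of $D$ in any order, one firing each, and reach $M_v$. Intermediate markings may have an empty satisfaction place; in the example, the first firing empties $p_1^{sat}$ or $p_2^{sat}$. This is harmless, because $\mathcal{M}^{obs}$ only asks for $M_v\in\mathcal{R}(M_{0,S})$, not for a path through observable markings.

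Step~1 also needs one precise sentence. The Case~3 swaps are fired inside a component net, not in $H_S$. Since $p_k^{sat}$ is adjacent only to swaps of variables of $\varphi_k$, with the same arcs in every net containing $\varphi_k$, each component keeps its clause invariants. After alignment, the composed marking agrees with that component on all places of $\varphi_k$, so $M_{0,S}$ satisfies both invariants. This requires every conflict to have actually been resolved. A conflict left unresolved, which Lemma~\ref{lemma:marking-composition} tolerates, leaves some satisfaction count of the other component inconsistent with the composed literal marking.

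With these two repairs your argument is complete, and it supplies the reachability step the paper's proof omits.
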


\begin{proof}
	Let $\varphi = \bigwedge_{i \in S} \varphi_i$ be the CNF representation of the encoded formula, where each clause $\varphi_i$ is represented by a subnet $H_i$ with clause-satisfaction place $p_i^{sat}$. Consider any valuation $v$ such that $v \models \varphi$. Then $v \models \varphi_i$ for all $i \in S$, so each clause $\varphi_i$, at least one literal is true under $v$. Let $M_v$ be the marking induced by $v$ on the literal places of $H$. By construction of each subnet $H_i$, this implies $M_v(p_i^{sat}) \ge 1$. Therefore, $M_v(p_i^{sat}) \ge 1$ for all $i \in S$. Hence, by definition of observable markings, $M_v \in \mathcal{M}^{obs}$.
\end{proof}

\begin{theorem}[Bijection of the HyPN encoding]
	\label{theorem:bijection}
	Let $\varphi$ be the constraint formula encoded by a synthesized HyPN $H$. Then there exists a bijection $\Phi : Val(\varphi) \leftrightarrow \mathcal{M}^{obs}$.
\end{theorem}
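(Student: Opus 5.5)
The natural bijection is $\Phi(v) = M_v$, the marking built in the proof of Theorem~\ref{theorem:completeness}. Its inverse is $M \mapsto v_M = m(M)$. I would prove three things in order: that $\Phi$ is well defined and lands in $\mathcal{M}^{obs}$, that it is injective, and that it is surjective. Equivalently, one can show that $m$ restricted to $\mathcal{M}^{obs}$ maps into $Val(\varphi)$ and that $m\circ\Phi$ and $\Phi\circ m$ are identities.

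\emph{Step 1: $\Phi$ is well defined.} For $v \in Val(\varphi)$, set $M_v(p_x)=1$ and $M_v(p_{\bar x})=0$ if $v(x)$ is true, and the reverse otherwise. On satisfaction places, set
\[
M_v(p_i^{sat}) = |S_i| - \sum_{l \in S_i} M_v(p_{\bar l}),
\]
which is the value forced by the clause-level invariant. This makes $M_v$ unique. Membership in $\mathcal{M}^{obs}$ then follows from Theorem~\ref{theorem:completeness}, with one extra check: $M_v \in \mathcal{R}(M_{0,S})$. To establish reachability, I would start from $M_{0,S}$ and fire swap transitions $t_{x,\bar x}$ or $t_{\bar x,x}$ one variable at a time, choosing an order that keeps every guard place nonnegative and enabled. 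The complementary swap argument used in Theorems~\ref{theorem:commutativity} and~\ref{theorem:associativity} is the template for this.

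\emph{Step 2: $\Phi$ is injective.} If $v \neq v'$, they differ on some variable $x$, so $M_v(p_x) \neq M_{v'}(p_x)$ and hence $M_v \neq M_{v'}$.

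\emph{Step 3: $\Phi$ is surjective.} Take $M \in \mathcal{M}^{obs}$. By Theorem~\ref{theorem:soundness}, $v_M \in Val(\varphi)$. It remains to show $\Phi(v_M)=M$.
\begin{itemize}
\item Every reachable marking satisfies the literal invariant $M(p_x)+M(p_{\bar x})=1$, so the literal places of $M$ are determined by $v_M$.
\item Every reachable marking satisfies each clause invariant $\sum_{l\in S_i} M(p_{\bar l}) + M(p_i^{sat}) = |S_i|$, so the guard places are determined by the literal places.
\end{itemize}
Hence $M = M_{v_M}$. Both invariants are place invariants of the composed net, since composition only unions arcs and each guard place is touched only by the transitions of its own clause. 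They hold at $M_{0,S}$ by construction and after the Case~3 swaps, and they are therefore preserved along all firings.

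\emph{Main obstacle.} The delicate part is reachability in Step~1: the path from $M_{0,S}$ to $M_v$ must never fire a transition $t_{l,\bar l}$ while $p_i^{sat}$ is empty. My first attempt is the following ordering. First fire all swaps that make a literal true, that is, those that add tokens to guard places. Then fire the swaps that make a literal false. At the target, every guard satisfies $M_v(p_i^{sat}) \ge 1$, and during the second phase guard values only decrease monotonically toward their target values, which are at least $1 \ge 0$. So no guard place is ever required to go negative, and every transition fired is enabled.

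The same invariant argument also justifies the implicit claim in Theorem~\ref{theorem:completeness} that satisfying literals yield $M_v(p_i^{sat}) \ge 1$. If the two-phase ordering fails because of the incompatibility case in Case~3, I would note that the theorem presupposes a successfully synthesized $H$, where that case does not occur.
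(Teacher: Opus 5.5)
Your overall route is the paper's. You use the same $\Phi(v)=M_v$: your guard value $|S_i|-\sum_{l\in S_i}M_v(p_{\bar l})$ is exactly the paper's count of literals of $\varphi_i$ true under $v$. You also use the same inverse $M\mapsto v_M$, justified by soundness. You go further than the paper in two places. The paper never checks $M_v\in\mathcal{R}(M_{0,S})$; it infers observability from $M_v(p^{sat}_i)\ge 1$ alone. It also asserts $\Phi(\Psi(M))=M$ ``by construction''. Your place-invariant argument in Step~3 is correct and supplies the missing justification for that identity.

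Your reachability argument in the ``main obstacle'' paragraph, however, does not work as stated. In the composed net no swap transition only adds tokens to guard places. The transition $t_{\bar x,x}$ makes $x$ true and $\bar x$ false at the same time, so it feeds every $p^{sat}_i$ with $x\in S_i$ and drains every $p^{sat}_j$ with $\bar x\in S_j$. In Scenario~B, $t_{\bar r,r}$ feeds $p^{sat}_2,p^{sat}_4$ and drains $p^{sat}_1,p^{sat}_5$. In Scenario~A, making $c$ true via $t_{\bar c,c}$ only drains $p^{sat}_1,p^{sat}_5$. So your two phases are not well defined, and both monotonicity claims fail.

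The fix is simpler than any ordering. By the clause invariant, $M(p^{sat}_i)$ equals the number of literals of $\varphi_i$ true at $M$. So whenever $p_l$ is marked, each guard $p^{sat}_i$ with $l\in S_i$ already holds at least one token, and $t_{l,\bar l}$ is enabled. Hence at every reachable marking, every swap whose literal input place is marked is enabled. Flipping the variables on which $v$ differs from the initial valuation, in any order, therefore reaches $M_v$. Intermediate markings may have empty guards; this is harmless, because $\mathcal{R}$ is plain reachability, not observable reachability. In fact $\mathcal{R}(M_{0,S})$ contains exactly one marking per valuation, matching the state-space sizes in the paper's comparative table.

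Your closing appeal to Case~3 incompatibility is therefore beside the point. Successful synthesis is needed only to obtain a well-defined $M_{0,S}$ that satisfies both invariants.
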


\begin{proof}
	Let $\varphi=\bigwedge_{i\in S}\varphi_i$, $\varphi_i=\bigvee_{l\in S_i} l$. Define $\Phi: Val(\varphi)\to \mathcal{M}^{obs}$ by assigning to each $v\in Val(\varphi)$ $\Phi(v)=M_v$, where $M_v(p_x)=1 \iff v(x)=1$, $M_v(p_{\bar x})=1 \iff v(x)=0$, and for each clause $\varphi_i$, $M_v(p_i^{sat})= \left|\{\, l\in S_i \mid v \models l \,\}\right|$. Here, $v \models l$ means that $l$ is true under $v$. Since $v\models\varphi$, every clause $\varphi_i$ contains at least one literal true under $v$, hence $M_v(p_i^{sat})\ge 1$ for all $i\in S$. Therefore, $M_v\in \mathcal{M}^{obs}$, so $\Phi$ is well-defined. Next, define $\Psi:\mathcal{M}^{obs}\to Val(\varphi)$ by assigning to each observable marking $M$ the valuation $\Psi(M)=v_M$ induced by the literal places, i.e., $v_M(x)=1 \iff M(p_x)=1$. By Theorem~\ref{theorem:soundness}, $v_M\models\varphi$, so $\Psi$ is well-defined. By construction, $\Psi(\Phi(v))=v$ for all $v\in Val(\varphi)$, and $\Phi(\Psi(M))=M$ for all $M\in\mathcal{M}^{obs}$. Thus, $\Phi$ and $\Psi$ are mutually inverse, and $\Phi$ is a bijection between $Val(\varphi)$ and $\mathcal{M}^{obs}$.
\end{proof}

\subsection{Scalability and structural complexity}
The algebraic properties of the composition operator, in particular closure and associativity, enable incremental construction of the synthesized HyPN. New constraints can therefore be incorporated by composing additional clause subnets without reconstructing the entire model. The structural size of the synthesized Petri net grows linearly with the size of the constraint formula: each literal introduces a constant number of places and transitions, while each clause contributes a clause-satisfaction place and a bounded number of arcs connecting the clause subnet to the literal places. Consequently, the numbers of places and transitions satisfy $|P|, |T| = O(|L|+|C|)$, where $|L|$ and $|C|$ denote the numbers of literals and clauses in the CNF formula.

The modular structure of the synthesized net supports incremental analysis and simulation, as clause-satisfaction places are local to clause subnets, while literal places are shared across clause subnets. By Theorem~\ref{theorem:bijection}, the number of observable markings is exactly the number of satisfying valuations of the constraint formula. Although it may be large for formulas with many solutions, checking observability is straightforward, as it only requires verifying that each clause-satisfaction place $p_i^{sat}$ contains at least one token.			

\section{Lunar rover case study}
\label{sec:endurance-case-study}

We consider a case study inspired by the Endurance mission~\cite{keane_endurance_nodate}. A lunar rover operates on the far side of the Moon and must visit a sequence of points of interest $(\textrm{poi}_1,\ldots,\textrm{poi}_k)$ in the South Pole--Aitken basin. Along the traverse, it must take scientific measurements every $\Delta_m$ kilometers and communicate with Earth whenever a relay satellite provides a communication window. The mission succeeds if all points of interest are visited, no measurement or communication opportunity is missed, and the rover remains operational.

As in many cyber-physical autonomous systems, the control software is modular. A supervisory controller coordinates three subsystems: navigation, communication, and instrumentation. Figure~\ref{fig:endurance_details} illustrates this architecture. The supervisor receives a mission command $visit(\textrm{poi}_1,\ldots,\textrm{poi}_k)$ and issues commands to the subsystems: $drive(\textrm{poi}_i)$ to navigation, $comm$ to communication, and $measure$ to the instrument. In return, the subsystems provide observations $o_N$, $o_C$, and $o_I$ describing their states and the outcomes of issued commands. Such modular architectures are susceptible to coordination problems such as deadlocks and livelocks, motivating formal analysis of the integrated behavior.

\begin{figure}
    \centering
    \includegraphics[width=\linewidth]{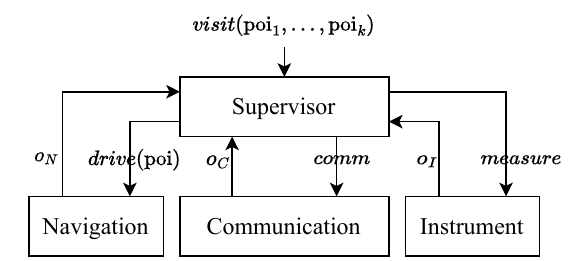}
    \caption{A high-level view of the architecture of a hypothetical lunar rover.}
    \label{fig:endurance_details}
\end{figure}

The rover is powered by a battery with state of charge $\textrm{soc}$. Since driving consumes more power than the generator can provide, the supervisor must manage energy carefully. When $\textrm{soc}$ falls below a minimum threshold $\textrm{MIN}$, the supervisor issues a $recharge$ command, placing the rover in hibernation so that generated power can restore the battery. Recharging stops once $\textrm{soc}$ reaches an upper threshold $\textrm{MAX}$. Although $\textrm{soc}$ is continuous, we use an abstraction based on three regions: below $\textrm{MIN}$, between $\textrm{MIN}$ and $\textrm{MAX}$, and above $\textrm{MAX}$.

Communication opportunities occur periodically. Because telemetry is essential for monitoring rover health and mission progress, communication is treated as a high-priority activity and should always be attempted during an available window. 

We use an abstract model in which each command is associated with a binary state variable indicating whether the corresponding activity is in progress. Specifically, $d$, $c$, $m$, and $r$ denote driving, communication, measurement, and recharging, respectively.  					

\section{Scenarios}
\label{sec:scenarios}
To illustrate the applicability of the proposed correct-by-construction synthesis algorithm (Sec.~\ref{sec:model-synthesis}), two representative scenarios are derived from the lunar rover case study presented in Sec.~\ref{sec:endurance-case-study}. The synthesized Petri net models capture the supervisory logic governing the activation, deactivation, and concurrency of rover activities, rather than their physical dynamics. For each scenario, a set of propositional constraints $\mathcal{C}$ is defined, and a HyPN $H$ is synthesized such that all observable markings $M \in \mathcal{M}^{obs}$ satisfy the global constraint formula $\varphi = \bigwedge \mathcal{C}$, that is, $M \models \varphi$. The resulting models are analyzed through their observable behavior.

\subsection{Scenario A}
\label{subsec:scenario-A}

Let the supervisor control four rover activities under supervisory control: 
1) drive ($d$); 
2) communicate ($c$), 
3) measure ($m$), and 
4) recharge ($r$). 
These activities are subject to constraints ensuring safe operation. Scenario A focuses on the fundamental mutual-exclusion constraints between activities. The constraints and synthesis steps are given below.

\subsubsection{HyPN synthesis}
\label{subsubsec:scenario-a-hypn-synthesis}

\textit{Constraints.}
Consider the set $\mathcal{C} = \{C_1, C_2, C_3, C_4\}$:
\begin{enumerate}[leftmargin=1.0cm]
	\item[$C_1$:] $d \Rightarrow (\bar{c} \land \bar{m} \land \bar{r})$: excluding all other activities,
	\item[$C_2$:] $m \Rightarrow (\bar{d} \land \bar{r})$: excluding driving and recharging,
	\item[$C_3$:] $r \Rightarrow (\bar{d} \land \bar{m} \land \bar{c})$: excluding all other activities,
	\item[$C_4$:] $c \Rightarrow (\bar{d} \land \bar{r})$: excluding driving and recharging.
\end{enumerate}

\textit{Literal sets.}
The corresponding Boolean variables are $V = \{c, d, m, r\}$, with positive literals $L = V$ and negative literals $\bar{L} = \{\bar{c}, \bar{d}, \bar{m}, \bar{r}\}$. The set of all literals is $\mathcal{L} = L \cup \bar{L}$.

\textit{Formula construction.}
The global constraint formula is defined as $\varphi = C_1 \land C_2 \land C_3 \land C_4$. Its CNF is given by:
$
\varphi^{CNF} =
(\bar{d} \lor \bar{c}) \land
(\bar{d} \lor \bar{m}) \land
(\bar{d} \lor \bar{r}) \land
(\bar{m} \lor \bar{r}) \land
(\bar{r} \lor \bar{c}).
$

\begin{figure}
	\centering
	\includegraphics[width=0.9\linewidth]{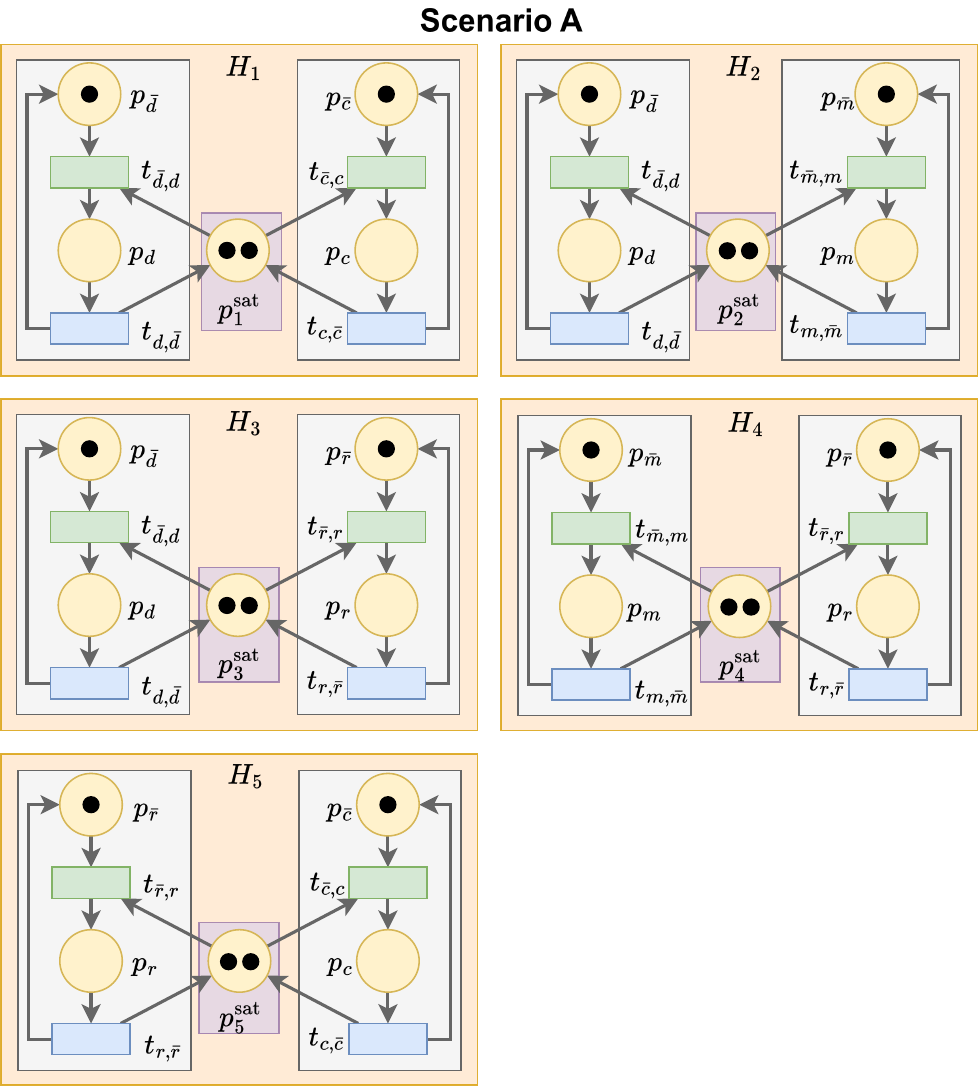}
	\caption{Clause-specific HyPN subnets $H_1,\dots,H_5$ corresponding to the five mutual-exclusion clauses in Scenario~A.}
	\label{fig:pn_scenario_a_step5}
\end{figure}

\textit{Clause-wise synthesis.}
The CNF is $\varphi^{CNF} = \varphi_1 \land \varphi_2 \land \varphi_3 \land \varphi_4 \land \varphi_5$, where $\varphi_1 = (\bar{d} \lor \bar{c})$, $\varphi_2 = (\bar{d} \lor \bar{m})$, $\varphi_3 = (\bar{d} \lor \bar{r})$, $\varphi_4 = (\bar{m} \lor \bar{r})$, and $\varphi_5 = (\bar{r} \lor \bar{c})$. Each clause $\varphi_i$ yields a HyPN subnet $H_i$ from the literal set $S_i \subseteq \mathcal{L}$. The resulting subnets are shown in Fig.~\ref{fig:pn_scenario_a_step5} and are constructed analogously.

\textit{Composition.}
The final HyPN is $H_A^{final} = H_{\{1,\ldots,5\}} = H_1 \oplus H_2 \oplus H_3 \oplus H_4 \oplus H_5$. Since no literal conflicts occur, composition is straightforward. Fig.~\ref{fig:pn_scenario_1_composition_from_1_to_5} shows the successive compositions $H_1 \oplus H_2$, $H_{\{1,2\}} \oplus H_3$, $H_{\{1,2,3\}} \oplus H_4$, and $H_{\{1,2,3,4\}} \oplus H_5$, leading to $H_{\{1,\ldots,5\}}$. This process does not induce exponential structural growth. In particular, all literal places already appear in the intermediate net $H_{\{1,2,3\}}$, and composing with $H_4$ and $H_5$ only adds two additional satisfaction places ($p^{sat}_4$ and $p^{sat}_5$) and the associated arcs. This highlights the structural scalability of the approach.

\begin{figure}
	\centering
	\includegraphics[width=1.0\linewidth]{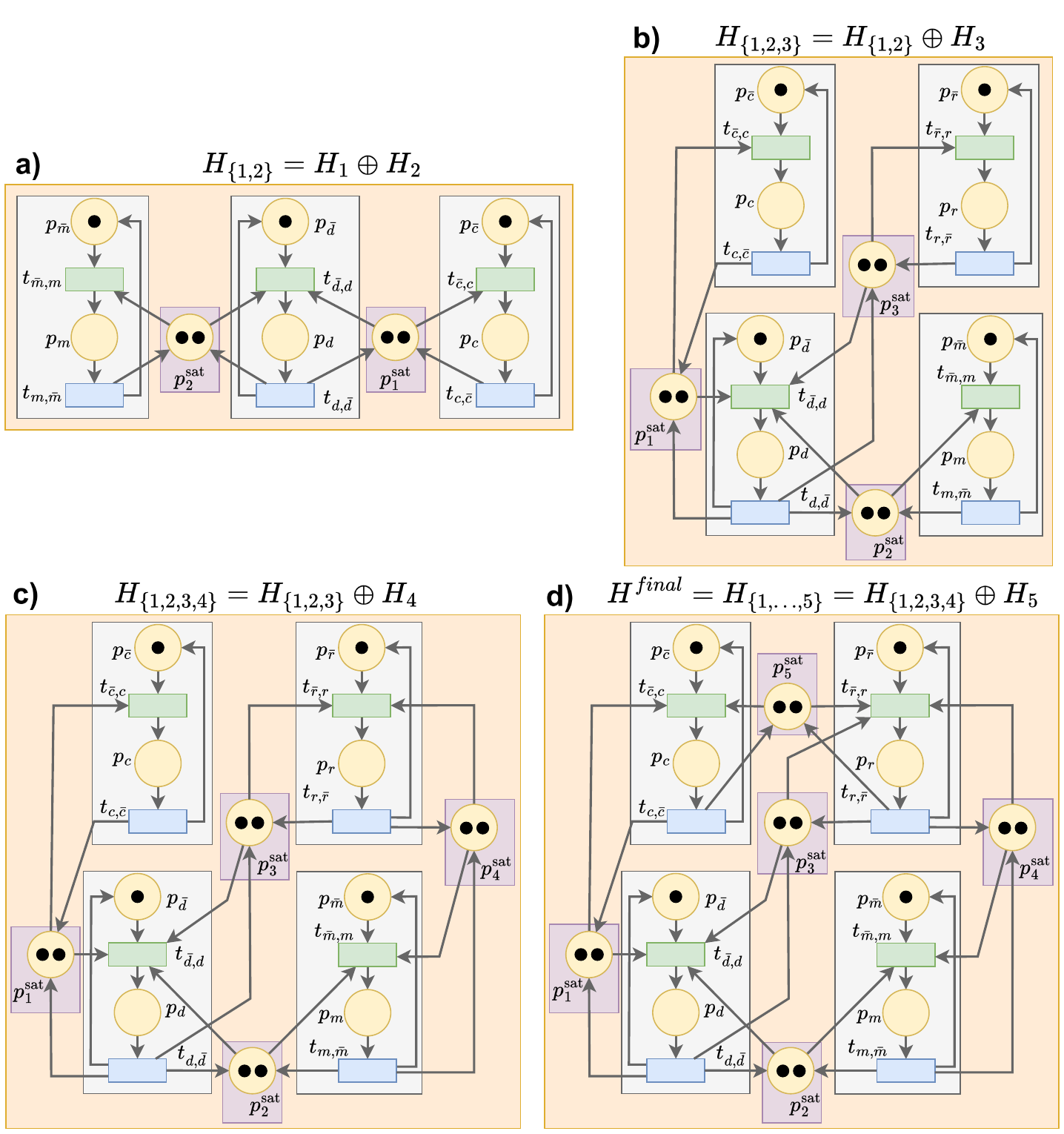}
	\caption{Stepwise composition of clause-specific HyPN subnets $H_1,\ldots,H_5$ in Scenario~A, resulting in $H^{final}_A = H_{\{1,\ldots,5\}}$.}
	\label{fig:pn_scenario_1_composition_from_1_to_5}
\end{figure}

\subsubsection{Observable behavior analysis}
\label{subsubsec:scenario-a-observable-behavior-analysis}

\textit{Satisfying valuations and observable markings.}
For $V=\{c,d,m,r\}$, only six of the $2^{|V|}=16$ Boolean valuations satisfy $\varphi$, in one-to-one correspondence with the observable markings of the synthesized HyPN (as shown in Table~\ref{tab:scenario-a-valuations}).
\begin{table}
	\centering
	\caption{Satisfying valuations $v_i \in Val(\varphi)$ and corresponding observable markings. The place ordering is $(p_{\bar{c}}, p_{c}, p_{\bar{r}}, p_{r}, p_{\bar{d}}, p_{d}, p_{\bar{m}}, p_{m}, p^{sat}_1, \ldots, p^{sat}_5)$.}
	\label{tab:scenario-a-valuations}
	\begin{tabular}{|c | |c | c | c | c || c |}
		\hline
		& \multicolumn{4}{c||}{Variables} & Observable marking \\ \hline
		Val. & $c$ & $r$ & $d$ & $m$ & $M_i = (\text{literal places}; \text{sat. places})$\\
		\hline \hline
		$v_1$ & 0 & 0 & 0 & 0 &  $M_0=(1,0,1,0,1,0,1,0; 2,2,2,2,2)$\\ \hline
		$v_2$ & 0 & 1 & 0 & 0 &  $M_5=(1,0,0,1,1,0,1,0; 2,2,1,1,1)$\\ \hline
		$v_3$ & 0 & 0 & 0 & 1 &  $M_3=(1,0,1,0,1,0,0,1; 2,1,2,1,2)$\\ \hline
		$v_4$ & 0 & 0 & 1 & 0 &  $M_4=(1,0,1,0,0,1,1,0; 1,1,1,2,2)$\\ \hline
		$v_5$ & 1 & 0 & 0 & 0 &  $M_1=(0,1,1,0,1,0,1,0; 1,2,2,2,1)$\\ \hline
		$v_6$ & 1 & 0 & 0 & 1 &  $M_2=(0,1,1,0,1,0,0,1; 1,1,2,1,1)$\\ \hline
	\end{tabular}
\end{table}

\begin{table*}
	\centering
	\caption{Scenario A: Firing sequences $\sigma_{i,j} \in \Sigma_H(M_i,M_j)$ for observable markings. The first column lists the active activities in each marking, the second the satisfying valuation of $\varphi$, and the third the observable state, where $S_i$ corresponds to $M_i$.}
	\label{tab:scenario_a_transitions_between_observable_markings}
	\begin{tabular}{|c|c|c|c||c|c|c|c|c|c|}
		\hline
		Active & Val. & State & Marking & $M_0$ & $M_1$ & $M_2$ & $M_3$ & $M_4$ & $M_5$ \\
		\hline \hline
		$\emptyset$ & $v_1$ & $S_0$ & $M_0$ &  -- & $\langle t_{\bar{c},c} \rangle$ & $\langle t_{\bar{c},c}, t_{\bar{m},m} \rangle$ & $\langle t_{\bar{m},m} \rangle$ & $\langle t_{\bar{d},d} \rangle$ & $\langle t_{\bar{r},r} \rangle$ \\ \hline
		$\{c\}$ & $v_5$ & $S_1$  & $M_1$ &  $\langle t_{c,\bar{c}} \rangle$ & -- & $\langle t_{\bar{m},m} \rangle$ & $\langle t_{\bar{m},m}, t_{c,\bar{c}} \rangle$ & $\langle t_{\bar{d},d}, t_{c,\bar{c}} \rangle$ & $\langle t_{\bar{r},r}, t_{c,\bar{c}} \rangle$ \\ \hline
		$\{c,m\}$ & $v_6$ & $S_2$ & $M_2$ &  $\langle t_{c,\bar{c}}, t_{m,\bar{m}} \rangle$ & $\langle t_{m,\bar{m}} \rangle$ & -- & $\langle t_{c,\bar{c}} \rangle$ & $\langle t_{\bar{d},d}, t_{c,\bar{c}}, t_{m,\bar{m}} \rangle$ & $\langle t_{\bar{r},r}, t_{c,\bar{c}}, t_{m,\bar{m}} \rangle$ \\ \hline
		 $\{m\}$ & $v_3$ & $S_3$& $M_3$ & $\langle t_{m,\bar{m}} \rangle$ & $\langle t_{\bar{c},c}, t_{m,\bar{m}} \rangle$ & $\langle t_{\bar{c},c} \rangle$ & -- & $\langle t_{\bar{d},d}, t_{m,\bar{m}} \rangle$ & $\langle t_{\bar{r},r}, t_{m,\bar{m}} \rangle$ \\ \hline
		$\{d\}$ & $v_4$ & $S_4$ & $M_4$ &  $\langle t_{d,\bar{d}} \rangle$ & $\langle t_{\bar{c},c}, t_{d,\bar{d}} \rangle$ & $\langle t_{\bar{c},c}, t_{\bar{m},m}, t_{d,\bar{d}} \rangle$ & $\langle t_{\bar{m},m}, t_{d,\bar{d}} \rangle$ & -- & $\langle t_{\bar{r},r}, t_{d,\bar{d}} \rangle$ \\ \hline
		$\{r\}$ & $v_2$ & $S_5$ & $M_5$ &  $\langle t_{r,\bar{r}} \rangle$ & $\langle t_{\bar{c},c}, t_{r,\bar{r}} \rangle$ & $\langle t_{\bar{c},c}, t_{\bar{m},m}, t_{r,\bar{r}} \rangle$ & $\langle t_{\bar{m},m}, t_{r,\bar{r}} \rangle$ & $\langle t_{\bar{d},d}, t_{r,\bar{r}} \rangle$ & -- \\ \hline
	\end{tabular}
\end{table*}

\textit{Transition sequences.}
Table~\ref{tab:scenario_a_transitions_between_observable_markings} summarizes representative firing sequences $\sigma_{i,j} \in \Sigma_H(M_i,M_j)$ between observable markings. Each entry $\langle t_1,\ldots,t_k \rangle$ denotes a firing sequence from $M_i$ to $M_j$, and the first column lists the active activities in each observable marking. The transitions directly correspond to activity activation and deactivation. For example, $\sigma_{0,1}=\langle t_{\bar{c},c}\rangle$ activates communication, whereas $\sigma_{2,4}=\langle t_{\bar{d},d}, t_{c,\bar{c}}, t_{m,\bar{m}} \rangle$ activates $d$ and deactivates $c$ and $m$.

Not all transitions between observable markings are atomic. For example, the transition from $M_1$ to $M_3$ may proceed as $\langle t_{\bar{m},m}, t_{c,\bar{c}} \rangle$, yielding $M_1 \xrightarrow{t_{\bar{m},m}} M_2 \xrightarrow{t_{c,\bar{c}}} M_3$, or as $\langle t_{c,\bar{c}}, t_{\bar{m},m} \rangle$, yielding $M_1 \xrightarrow{t_{c,\bar{c}}} M_0 \xrightarrow{t_{\bar{m},m}} M_3$. In both cases, the sequence passes through another observable marking, so it is non-atomic. This distinction becomes important below, when analyzing which transitions between observable states are direct and which require intermediate observable markings. When multiple admissible sequences exist, their selection may depend on the adopted policy, for example favoring shorter executions or those that preserve intermediate observable markings as safe stopping points.

\textit{Observable states.}
The observable states over $(c,r,d,m)$ are $S_0=(0,0,0,0)$, $S_1=(1,0,0,0)$, $S_2=(1,0,0,1)$, $S_3=(0,0,0,1)$, $S_4=(0,0,1,0)$, and $S_5=(0,1,0,0)$.

\textit{State-transition systems.} Fig.~\ref{fig:scenario_1_analysis_two_state_transition_systems} shows three transition relations over the observable states. Under reachability, a transition from $S_i$ to $S_j$ exists if there is a realizable sequence $\sigma \in T^*$ transforming $M_i$ into $M_j$, yielding a densely connected state-transition system (Fig.~\ref{fig:scenario_1_analysis_two_state_transition_systems}a). Under the HyPN transition semantics, a transition exists only if $\Sigma_H(M_i,M_j)\neq\emptyset$, i.e., if there is an atomic sequence from $M_i$ to $M_j$, so the induced relation is a subrelation of reachability, i.e., $\Rightarrow_H \subseteq \Rightarrow_R$ (Fig.~\ref{fig:scenario_1_analysis_two_state_transition_systems}b). 
Fig.~\ref{fig:scenario_1_analysis_two_state_transition_systems}c shows a further restriction, where direct atomic transitions are omitted whenever an alternative realizable path passes through intermediate observable states. This corresponds to a policy that prioritizes executions through states satisfying $\varphi$, for example to preserve safe stopping points or enforce a higher safety level. As a result, the induced transition graph becomes sparser, further reducing structural complexity. Although policy selection is outside the scope of this paper, it is a promising direction for future work.

\textit{Example and perspectives:}
Consider the transition from $S_5$ to $S_2$. Multiple realizable sequences exist:
a)~$S_5 \xrightarrow{\bar{r}} S_0 \xrightarrow{c} S_1 \xrightarrow{m} S_2$;
b)~$S_5 \xrightarrow{\bar{r}} S_0 \xrightarrow{m} S_3 \xrightarrow{c} S_2$;
c)~$S_5 \xrightarrow{c} (\text{internal}) \xrightarrow{m} (\text{internal}) \xrightarrow{\bar{r}} S_2$;
d)~$S_5 \xrightarrow{m} (\text{internal}) \xrightarrow{c} (\text{internal}) \xrightarrow{\bar{r}} S_2$.
Sequences (a--b) pass through observable states, whereas (c--d) remain internal. Hence, $\Sigma_H(M_5,M_2)\neq\emptyset$, so a direct transition appears in Fig.~\ref{fig:scenario_1_analysis_two_state_transition_systems}b.
In contrast, all realizable sequences from $S_0$ to $S_2$ pass through observable states, e.g., via $S_1$ or $S_3$, which implies $\Sigma_H(M_0,M_2)=\emptyset$. Accordingly, no direct transition appears between $S_0$ and $S_2$ in Fig.~\ref{fig:scenario_1_analysis_two_state_transition_systems}b.

These examples motivate the three perspectives in Fig.~\ref{fig:scenario_1_analysis_two_state_transition_systems}: reachability (a): based on admissible sequences; HyPN transition semantics (b): based on atomic sequences captured by $\Sigma_H(M_i,M_j)$; and a safety-oriented restriction (c): where some direct atomic transitions are omitted whenever an alternative admissible path passes through intermediate observable states. The last perspective reflects a policy that prioritizes paths through states satisfying $\varphi$, for example, to preserve intermediate observable states as safe stopping points. As a result, the induced graph becomes sparser, reducing structural complexity. While policy selection is outside the scope of this paper, this is an important direction for future work.

\begin{figure} 
	\centering 
	\includegraphics[width=0.9\linewidth]{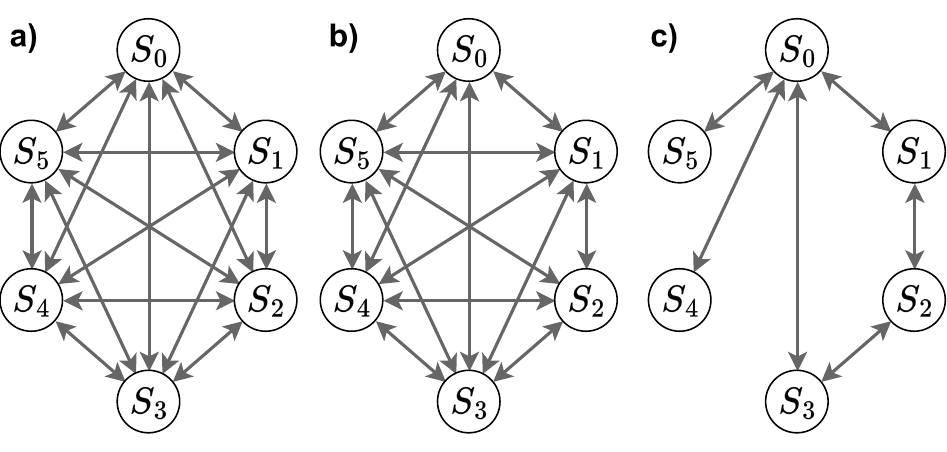}
	\caption{State-transition systems induced by the synthesized HyPN under three perspectives:
		(a) reachability, based on admissible sequences;
		(b) HyPN transition semantics, based on atomic sequences;
		(c) safety-oriented execution, omitting some direct atomic transitions to preserve intermediate observable states rather than bypassing them.}
	\label{fig:scenario_1_analysis_two_state_transition_systems}
\end{figure}

\subsection{Scenario B}
\label{subsec:scenario-B}

Let the supervisor control two activities: drive ($d$) and recharge ($r$). This scenario considers constraint-driven synthesis under resource availability, where action activation depends on the battery state of charge (SOC). The SOC is discretized into three mutually exclusive levels: $e_1 := (soc \leq \mathrm{MIN})$, $e_2 := (\mathrm{MIN} < soc \leq \mathrm{MAX})$, and $e_3 := (soc > \mathrm{MAX})$, where $soc$ denotes the battery state of charge. The constraints couple actions with resource levels and introduce exclusivity over $(e_1, e_2, e_3)$, which may lead to conflicting assignments on shared literal places during clause-level composition. Consequently, the synthesis requires the swap-based conflict resolution mechanism (Case~3 in Sec.~\ref{subsec:pn-composition}).

\subsubsection{HyPN synthesis}
\label{subsubsec:scenario-a-hypn-synthesis}

\textit{Constraints.} Consider the set of constraints $\mathcal{C} = \{C_1, C_2, C_3, C_4, C_5, C_6, C_7, C_8, C_9\}$, where:
\begin{enumerate}[leftmargin=1.0cm]
	\item[$C_1$:] $r \Rightarrow \bar{d}$: recharging excludes driving,
	\item[$C_2$:] $r \lor d$: at least one action is active,
	\item[$C_3$:] $d \Rightarrow (e_2 \lor e_3)$: driving requires sufficient energy,
	\item[$C_4$:] $e_1 \Rightarrow r$: low SOC implies recharging,
	\item[$C_5$:] $e_3 \Rightarrow \bar{r}$: high SOC disables recharging,
	\item[$C_6$:] $e_1 \Rightarrow (\bar{e}_2 \land \bar{e}_3)$,
	\item[$C_7$:] $e_2 \Rightarrow (\bar{e}_1 \land \bar{e}_3)$,
	\item[$C_8$:] $e_3 \Rightarrow (\bar{e}_1 \land \bar{e}_2)$,
	\item[$C_9$:] $e_1 \lor e_2 \lor e_3$: exactly one SOC level holds.
\end{enumerate}
These constraints define:
(i) exclusive choice between $r$ and $d$,
(ii) resource-dependent action feasibility, and
(iii) mutual exclusivity of SOC levels,
as captured by groups ($C_1$--$C_2$), ($C_3$--$C_5$), and ($C_6$--$C_9$), respectively.

\textit{Literal sets.} The Boolean variables are $V = \{r, d, e_1, e_2, e_3\}$, with $\mathcal{L} = \{x, \bar{x} \mid x \in V\}$.

\textit{Formula construction.}
The global constraint formula is defined as $\varphi = \bigwedge_{i=1}^{9} C_i$. Its CNF is $\varphi^{CNF} = (\bar{r} \lor \bar{d}) \land (r \lor d) \land (\bar{d} \lor e_2 \lor e_3) \land (\bar{e}_1 \lor r) \land (\bar{e}_3 \lor \bar{r}) \land (\bar{e}_1 \lor \bar{e}_2) \land (\bar{e}_1 \lor \bar{e}_3) \land (\bar{e}_2 \lor \bar{e}_3) \land (e_1 \lor e_2 \lor e_3)$.

\begin{figure}
	\centering
	\includegraphics[width=1.0\linewidth]{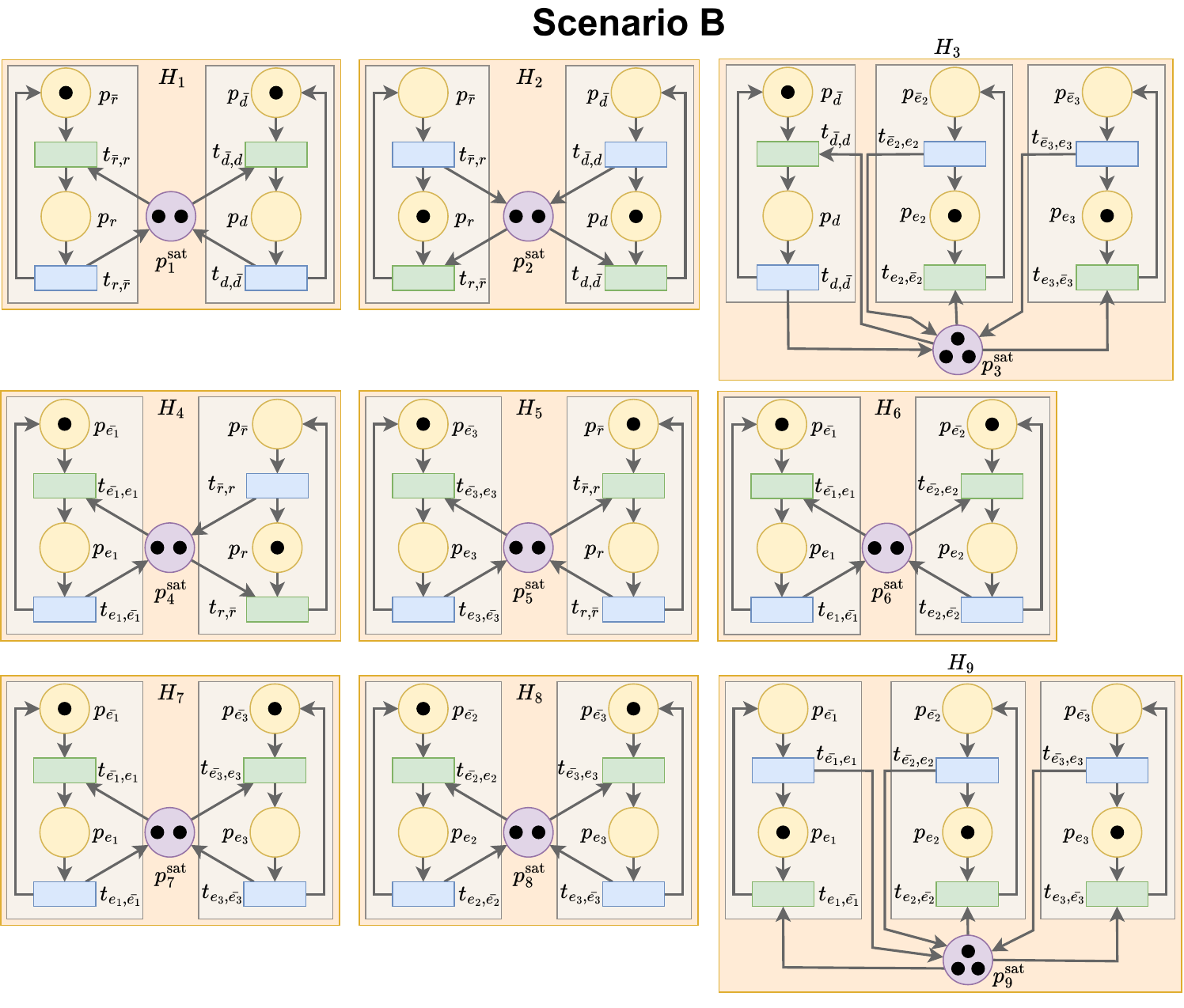}
	\caption{Synthesized sub-PNs: $H_1, \ldots, H_9$, corresponding to the sub-formulas $\varphi_1, \ldots, \varphi_9$ for Scenario~B.}
	\label{fig:pn_scenario_b_step5}
\end{figure}

\textit{Clause-wise synthesis.}
The CNF is $\varphi^{CNF} = \bigwedge_{i=1}^{9} \varphi_i$, where the clauses are:
$\varphi_1 = (\bar{r} \lor \bar{d})$, 
$\varphi_2 = (r \lor d)$,  
$\varphi_3 = (\bar{d} \lor e_2 \lor e_3)$, $\varphi_4 = (\bar{e}_1 \lor r)$,
$\varphi_5 = (\bar{e}_3 \lor \bar{r})$, 
$\varphi_6 = (\bar{e}_1 \lor \bar{e}_2)$,
$\varphi_7 = (\bar{e}_1 \lor \bar{e}_3)$, 
$\varphi_8 = (\bar{e}_2 \lor \bar{e}_3)$,	
$\varphi_9 = (e_1 \lor e_2 \lor e_3)$.
Each $\varphi_i$ yields a HyPN subnet $H_i$, resulting in nine subnets (Fig.~\ref{fig:pn_scenario_b_step5}).

\begin{figure}
	\centering
	\includegraphics[width=1.0\linewidth]{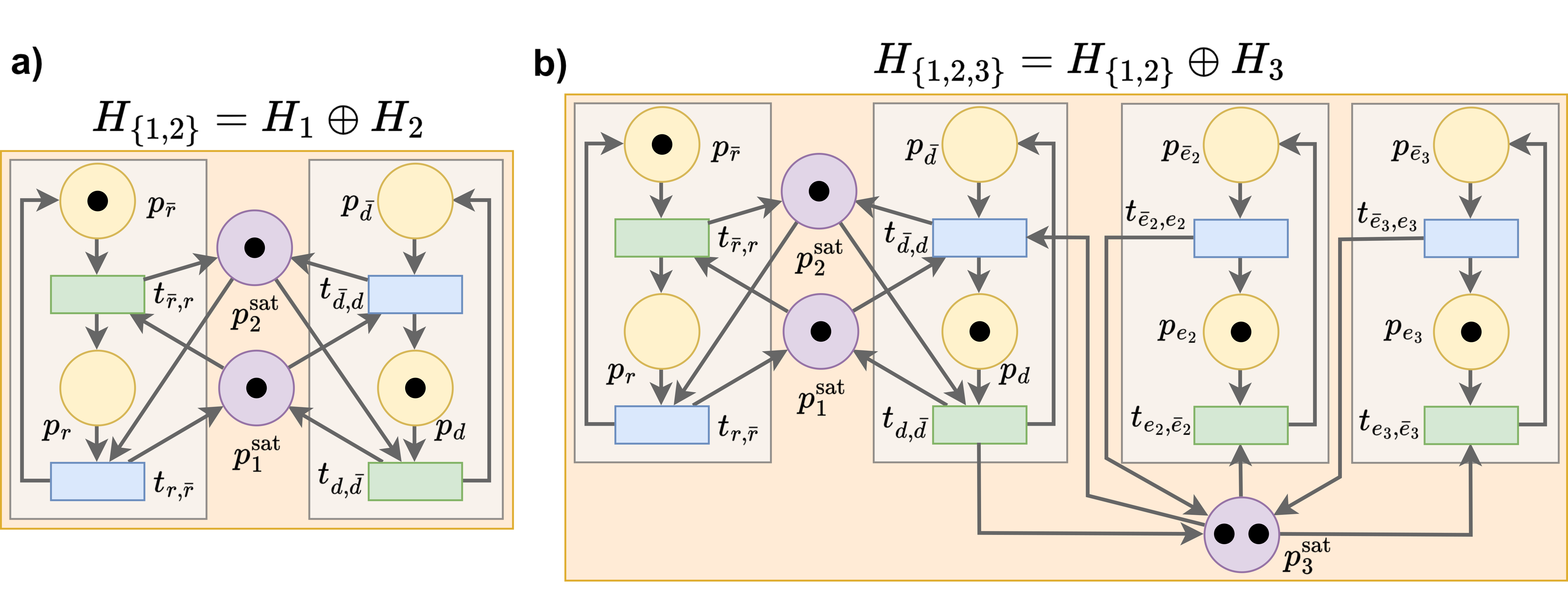}
	\caption{Representative intermediate compositions: a)~$H_{\{1,2\}} = H_1 \oplus H_2$, and b)~$H_{\{1,2,3\}} = H_{\{1,2\}} \oplus H_3$.}
	\label{fig:pn_scenario_b_composition_example}
\end{figure}

\textit{Composition.}
The final HyPN is $H^{final}_B = \bigoplus_{i=1}^{9} H_i$. The composition involves conflicting assignments on shared literal places, resolved via the swap-based procedure (Case~3 in Sec.~\ref{subsec:pn-composition}). Figure~\ref{fig:pn_scenario_b_composition_example} illustrates representative intermediate compositions. For example, composing $H_1 \oplus H_2$ yields conflicts in $r$ and $d$: the former is resolved by firing $t_{r,\bar{r}}$ in $H_2$, while the latter requires firing $t_{\bar{d},d}$ in $H_1$ to preserve satisfaction. Subsequent compositions proceed analogously. The final model $H^{final}_B = H_{\{1,\ldots,9\}}$ is shown in Fig.~\ref{fig:pn_scenario_b_final_pn_step_6_decomposed}. For clarity, the net is presented in five panels connected via satisfaction places, but should be interpreted as a single unified HyPN network, making input/output places and the influence of satisfaction places on variables more explicit.

\begin{figure}
	\centering
	\includegraphics[width=1.0\linewidth]{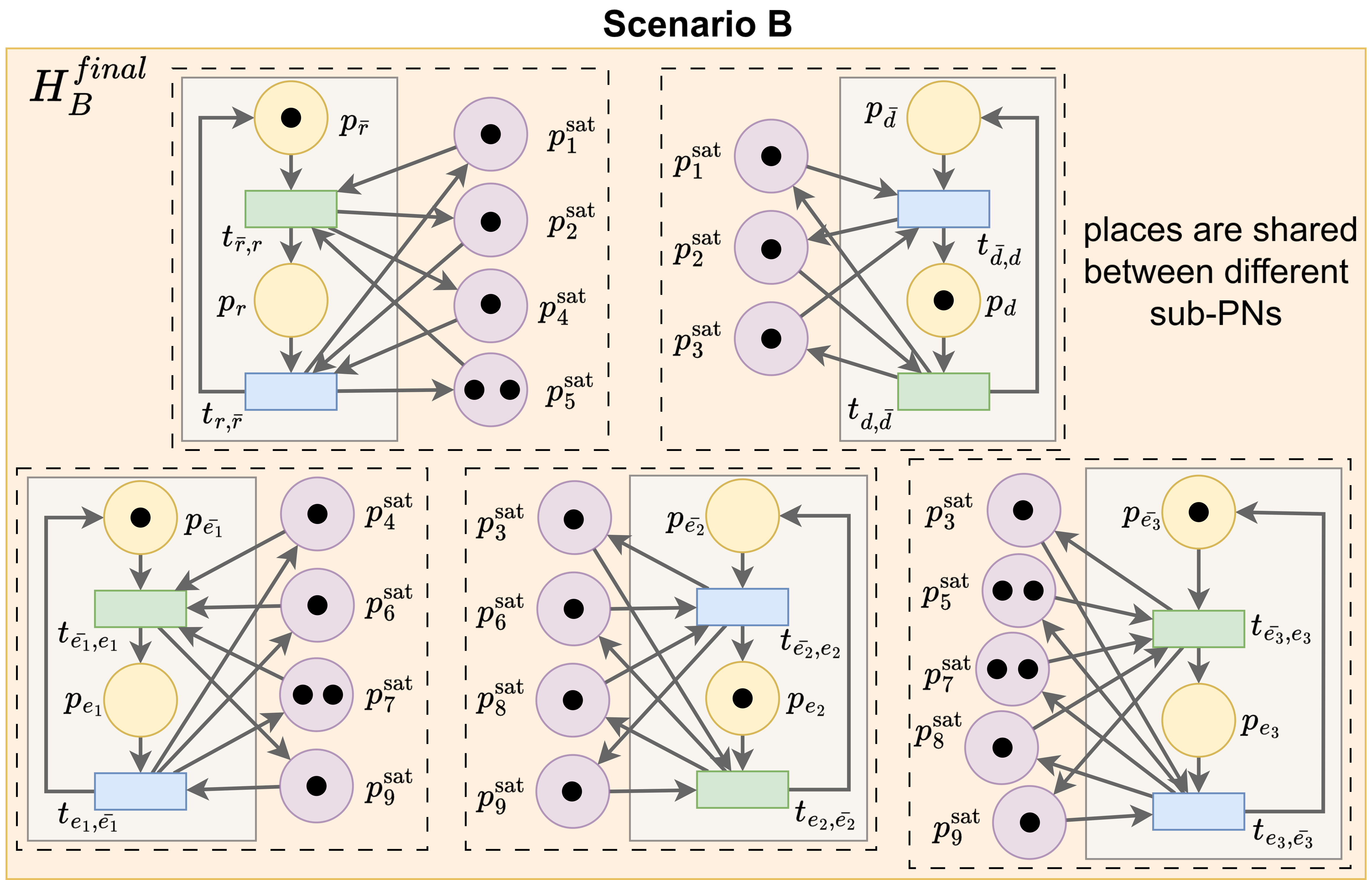}
	\caption{Final HyPN $H^{final}_B = H_{\{1,\ldots,9\}}$ for Scenario~B. The net is shown in five panels for clarity, interconnected via shared satisfaction places resulting from clause-level composition.}
	\label{fig:pn_scenario_b_final_pn_step_6_decomposed}
\end{figure}

\subsubsection{Observable behavior analysis}
\label{subsubsec:scenario-b-observable-behavior-analysis}

\textit{Satisfying valuations and observable markings.} The satisfying valuations of $\varphi$ and the corresponding observable markings $M_0,\ldots,M_3$ are given in Table~\ref{tab:scenario-b-valuations}. These define the observable states $S_0,\ldots,S_3$, with $|Val(\varphi)| = 4$. Each state $S_i$ corresponds to a valuation $v_{i+1}$.

\begin{table}
	\centering
	\footnotesize
	\setlength{\tabcolsep}{3pt}
	\caption{Satisfying valuations $v_i \in Val(\varphi)$ and corresponding observable markings. The place ordering is $(p_{\bar{r}}, p_{r}, p_{\bar{d}}, p_{d}, p_{\bar{e_1}}, p_{e_1}, p_{\bar{e_2}}, p_{e_2}, p_{\bar{e_3}}, p_{e_3}, p^{sat}_1,\ldots, p^{sat}_9)$.}
	\label{tab:scenario-b-valuations}
	\begin{tabular}{|c | |c | c | c | c | c || c |}
		\hline 
		& \multicolumn{5}{c||}{Variables} & Observable marking \\ \hline
		Val. & $r$ & $d$ & $e_1$ & $e_2$ & $e_3$ & $M_i = (\text{literal places}; \text{sat. places})$\\ \hline
		\hline
		$v_1$ & 0 & 1 & 0 & 1 & 0 &  $M_0=(1,\!0,\!0,\!1,\!1,\!0,\!0,\!1,\!1,\!0;\!1,\!1,\!1,\!1,\!2,\!1,\!2,\!1,\!1)$\\ \hline
		$v_2$ & 0 & 1 & 0 & 0 & 1 &  $M_1=(1,\!0,\!0,\!1,\!1,\!0,\!1,\!0,\!0,\!1;\!1,\!1,\!1,\!1,\!1,\!2,\!1,\!1,\!1)$\\ \hline
		$v_3$ & 1 & 0 & 1 & 0 & 0 &  $M_2=(0,\!1,\!1,\!0,\!0,\!1,\!1,\!0,\!1,\!0;\!1,\!1,\!1,\!1,\!1,\!1,\!1,\!2,\!1)$\\ \hline
		$v_4$ & 1 & 0 & 0 & 1 & 0 &  $M_3=(0,\!1,\!1,\!0,\!1,\!0,\!0,\!1,\!1,\!0;\!1,\!1,\!2,\!2,\!1,\!1,\!2,\!1,\!1)$\\ \hline
	\end{tabular}
\end{table}

\begin{table*}
	\centering
	\caption{Structure of transitions between observable markings (Scenario B).}
	\label{tab:scenario_b_transitions_between_observable_markings}
	\begin{tabular}{|c|c|c|c||c|c|c|c|}
		\hline
		Active & Val. & State & Marking & $M_0$ & $M_1$ & $M_2$ & $M_3$ \\
		\hline \hline
		$\{d, e_2\}$ & $v_1$ & $S_0$  & $M_0$ & -- & $\langle t_{e_2, \bar{e_2}}, t_{\bar{e_3}, e_3}  \rangle$  & $\langle t_{d,\bar{d}}, t_{e_2, \bar{e_2}}, t_{\bar{e_1},e_1}, t_{\bar{r}, r} \rangle$ & $\langle t_{d,\bar{d}}, t_{\bar{r}, r} \rangle$ \\ \hline
		$\{d, e_3\}$ & $v_2$ & $S_1$  & $M_1$ & $\langle t_{e_3, \bar{e_3}}, t_{\bar{e_2}, e_2} \rangle$ & -- & $\langle t_{d,\bar{d}}, t_{e_3, \bar{e_3}}, t_{\bar{e_1}, e_1}, t_{\bar{r}, r}  \rangle$ & $\langle t_{d,\bar{d}}, t_{e_3, \bar{e_3}}, t_{\bar{e_2}, e_2}, t_{\bar{r}, r} \rangle$ \\ \hline
		$\{r, e_1\}$ & $v_3$ & $S_2$  & $M_2$ & $\langle t_{r, \bar{r}}, t_{e_1, \bar{e_1}}, t_{\bar{e_2}, e_2}, t_{\bar{d}, d} \rangle$ & $\langle t_{r, \bar{r}}, t_{e_1, \bar{e_1}}, t_{\bar{e_3}, e_3}, t_{\bar{d}, d} \rangle$ & -- & $\langle t_{e_1, \bar{e_1}}, t_{\bar{e_2}, e_2} \rangle$ \\ \hline
		$\{r, e_2\}$ & $v_4$ & $S_3$  & $M_3$ & $\langle t_{r, \bar{r}},  t_{\bar{d}, d} \rangle$ & $\langle t_{r, \bar{r}}, t_{e_2, \bar{e_2}}, t_{\bar{e_3}, e_3}, t_{\bar{d}, d} \rangle$ & $\langle t_{e_2, \bar{e_2}}, t_{\bar{e_1}, e_1}  \rangle$ & -- \\ \hline
	\end{tabular}
\end{table*}

\textit{Transition structure.} The transitions between observable markings are summarized in Table~\ref{tab:scenario_b_transitions_between_observable_markings}. The induced state-transition system over $S_0,\ldots,S_3$ is fully connected under reachability, i.e., for any $(S_i,S_j)$ there exists a firing sequence $\sigma_{i,j}$ connecting $M_i$ and $M_j$.

\textit{Physical feasibility.} Not all logically admissible transitions are physically feasible, as some require intermediate observable states due to SOC dynamics. For example, a direct transition $S_2 \rightarrow S_1$ (from recharging at low SOC to driving at high SOC) would imply an instantaneous SOC increase and must proceed via $S_3$ (intermediate SOC level $e_2$). The physically feasible transitions are:
$S_1 \rightarrow S_0$ (SOC decreases during driving), 
$S_0 \rightarrow S_3$ (transition to recharging), 
$S_3 \rightarrow S_1$ (fully charged, driving resumes), 
$S_0 \rightarrow S_2$ (SOC drops to minimum), and 
$S_2 \rightarrow S_3$ (recharging). 
All remaining transitions violate the monotonic, resource-dependent SOC evolution.

\textit{Discussion and implications.} 
Logical feasibility does not imply physical realizability. While admissible transitions exist between all states, atomic transitions capture direct executability, and the underlying dynamics further restrict physically feasible transitions. Moreover, identical valuations (e.g., $e_2 = 1$ in $v_1$ and $v_4$) may correspond to different system behaviors (e.g., driving vs. recharging under different SOC levels), showing that system behavior is determined by transition sequences rather than individual states. Consequently, reasoning based solely on Boolean constraints may over-approximate feasible system evolution. This highlights a separation between logical, atomic, and physical levels, with HyPN capturing logical and atomic structure, while physical feasibility depends on system dynamics and is reflected in admissible transition sequences.

\subsection{Comparative summary}
\label{subsec:comparative-summary}

Table~\ref{tab:comparative-summary} summarizes key properties of the scenarios and highlights the general scaling behavior of the proposed synthesis procedure. The results confirm that the structural size grows linearly with the number of clauses, while the observable state space scales with the number of satisfying valuations. This highlights a separation between structural complexity and behavioral expressiveness, where a compact Petri net structure induces an exponentially large space of observable states.

\begin{table}
	\centering
	\caption{Comparison of synthesized HyPN models for two representative constraint classes}
	\scriptsize
	\label{tab:comparative-summary}
	\begin{tabular}{|c|p{1.5cm}|p{1.5cm}|c|}
		\hline
		Property & Scenario A & Scenario B & Scaling \\
		\hline
		Constraint class & mutual exclusion & resource-constrained & --- \\
		\hline
		Variables & 4 & 5 & $|V|$ \\
		\hline
		Clauses & 5 & 9 & $|C|$ \\
		\hline
		Places & 13 & 19 & $2|V| + |C|$ \\
		\hline
		Transitions & 8 & 10 & $2^{|V|}$ \\
		\hline
		Size of state space & 16 & 32 & $2^{|V|}$ \\
		\hline 
		Observable states & 6 & 4 & $|\mathrm{Val}(\varphi)|$ \\
		\hline
		Max observable edges & 12 & 5 & problem-dependent \\
		\hline
		Composition & direct & requires swaps & conflict-dependent \\
		\hline
	\end{tabular}
\end{table}					

\section{Discussion}
\label{sec:discussion}

The proposed HyPN model separates logical correctness from execution semantics by distinguishing observable markings from the underlying transition structure. This enables reasoning about which markings satisfy constraints and how they can be reached through admissible execution sequences.

\textbf{Structural scalability and state-space explosion.}
A key advantage of the synthesis is that it avoids explicit transition structures between all admissible Boolean states. Direct transitions between all admissible states induces a complete directed graph with $(2^{|V|} - 1)(2^{|V|} - 2)$ transitions. In contrast, the synthesized HyPN has linear structural complexity in $|V|$ and $|C|$, with exactly $2|V|$ transitions and $2|V| + |C|$ places. This separates structural complexity from behavioral complexity: while the observable marking space may be exponential, the execution model remains compact and compositional.

\textbf{From modeling to control and decision support.}
The framework supports constraint-aware control by enabling selection of actions that guarantee constraint satisfaction. This is relevant for safety-critical systems, where maintaining admissibility is essential. Explicit representation of admissible and inadmissible transitions supports AI explainability.

\textbf{Execution semantics and admissibility.}
The analysis of Scenarios~A and~B shows that reachability alone is insufficient to characterize system behavior. While logical constraints define the admissible state space, execution semantics determines feasible transitions. Not all logically admissible transitions are physically realizable. This requires joint reasoning about state validity and execution feasibility.

\textbf{Comparison with existing approaches.}
Compared to temporal-logic-based synthesis, the proposed approach prioritizes scalability over  expressiveness, focusing on invariant properties while avoiding state-space explosion. Compared to supervisory control of PNs, which modifies an existing model, it directly synthesizes a model from constraints, enabling a more systematic design. Table~\ref{tab:method-comparison} summarizes key differences. In particular, HyPN achieves linear structural scaling and direct synthesis from constraints, unlike the higher complexity of temporal-logic-based synthesis and the model dependency of supervisory control.

\textbf{Limitations and future work.}
For constraint-driven synthesis, the set of observable markings $\mathcal{M}^{obs}$ is directly determined by the constraint structure and can be evaluated in $O(|P^{sat}|)$. However, the framework does not yet provide mechanisms for selecting preferred execution sequences or enforcing optimality, and is limited to Boolean constraints and invariant properties, without support for temporal or probabilistic aspects. Future work will focus on policy selection for admissible transition sequences and integration with temporal reasoning, enabling direct evaluation of temporal logic over execution sequences without explicit automata.

\textbf{Implications for system design.}
The proposed framework provides a foundation for designing constraint-aware systems, where correctness is guaranteed by construction while maintaining a compact and scalable execution model. This enables integration of constraint satisfaction, execution reasoning, and decision-making within a unified formal representation for safety-critical and autonomous systems.

\begin{table}
	\setlength{\tabcolsep}{3pt}
	\centering
	\caption{Comparison of HyPN with synthesis and control}
	\scriptsize
	\label{tab:method-comparison}
	\begin{tabular}{|p{1.6cm}|p{1.2cm}|p{1.1cm}|p{2.1cm}|p{1.7cm}|}
		\hline
		Approach & Input & Scalability & Main strength & Limitation \\
		\hline
		Temporal-logic synthesis & temporal logic & exponential & expressive specs & high complexity \\
		\hline
		PN supervisory control & PN + constraints & model-dependent & enforcement on existing PN & requires a predefined model \\
		\hline
		HyPN & Boolean constraints & linear in $|V|, |C|$ & direct synthesis with explicit semantics & no temporal or optimal policies\\
		\hline
	\end{tabular}
\end{table}					

\section{Conclusions}
\label{sec:conclusions}
This paper introduced the HyPN model, which separates observable markings from the underlying PN execution and defines explicit execution semantics on them. A constraint-driven synthesis procedure was developed to automatically construct HyPN models from Boolean specifications in conjunctive normal form, guaranteeing that all observable markings satisfy the constraints by construction.

The proposed execution semantics, based on admissible firing sequences, highlights a fundamental gap between logical constraint satisfaction and realizable system behavior. The presented scenarios demonstrate that not all logically admissible transitions are executable, highlighting the need to distinguish between logical and executable behavior. Overall, the framework shifts system design from post-hoc verification toward constructing models that guarantee correctness by design, while maintaining a compact and scalable execution structure. Future work will focus on selection and control of admissible transition sequences, and extension of the framework with recovery mechanisms enabling transitions from non-observable to observable states, supporting initialization and fault recovery scenarios.			

\section*{Acknowledgments}
This work was supported by the Polish National Agency for Academic Exchange (NAWA) under the Bekker Programme, grant BPN/BEK/2024/1/00079, during a research stay at the Jet Propulsion Laboratory, California Institute of Technology.

Part of this work was carried out at the Jet Propulsion Laboratory, California Institute of Technology, under a contract with the National Aeronautics and Space Administration.

Part of this work was further developed at the Warsaw University of Technology.

\end{document}